\documentclass[journal]{IEEEtran}

\usepackage[T1]{fontenc}
\usepackage{cite}
\usepackage{amsmath,amssymb,amsfonts}
\usepackage{amsthm}
\usepackage{graphicx}
\usepackage{tikz}
\usetikzlibrary{arrows.meta,positioning}
\usepackage{dcolumn}
\usepackage{bm}
\usepackage{xcolor}
\usepackage{soul}
\usepackage{comment}
\usepackage[english]{babel}
\usepackage[linesnumbered,ruled,vlined]{algorithm2e}
\usepackage{algpseudocode}
\usepackage{acronym}
\usepackage{cleveref}
\acrodef{bdd}[BDD]{bad data detector}
\acrodef{bic}[BIC]{Bayesian information criterion}
\acrodef{cm}[CM]{cycle manifold}
\acrodef{cs}[CS]{cycle space}
\acrodef{csd}[CSD]{cycle space detector}
\acrodef{fdia}[FDIA]{false data injection attack}
\acrodef{gpu}[GPU]{graphics processing unit}
\acrodef{mas}[MAS]{Minimal Attack Set}
\acrodef{mdbs}[MDBS]{Minimum Defender Blocking Set}
\acrodef{pca}[PCA]{principal component analysis}
\acrodef{rms}[RMS]{root mean square}
\acrodef{rmt}[RMT]{random matrix theory}
\acrodef{rrqr}[RRQR]{rank-revealing QR factorization}
\acrodef{svd}[SVD]{singular value decomposition}
\acrodef{tls}[TLS]{total least squares}
\newtheorem{theorem}{Theorem}

\newtheorem{corollary}[theorem]{Corollary}
\theoremstyle{definition}
\newtheorem{definition}{Definition}
\theoremstyle{remark}

\DeclareMathOperator{\rank}{rank}

\def\graph{\mathbb{G}}
\def\E{E}
\def\V{V}
\def\v{v}
\def\e{e}
\def\noiseE{\mathbf{E}}

\def\dimx{n}
\def\dimz{m}
\def\time{T}
\def\busi{i}
\def\busj{j}
\def\P{\mathbf{P}}

\def\reac{x}
\def\angle{\theta}
\def\G{\mathbf{G}}
\def\h{h}
\def\A{\mathbf{A}}
\def\row{\mathbb{R}_{\rho}}
\def\nul{\mathbb{N}}

\def\z{\vec{z}}
\def\Z{\mathbf{Z}}
\def\S{\mathbf{S}}

\def\R{\mathbb{R}}
\def\I{\mathbf{I}}
\def\const{k}
\def\cyc{\mathbb{C}}
\def\c{c} 

\def\bi{\mathbb{B}}

\def\C{C}

\def\thresh{\tau} 

\def\H{\mathbf{H}}

\def\noise{e}
\def\p{p}

\def\x{\vec{x}}
\def\X{\mathbf{X}}
\def\n{\vec{n}}

\def\ex{\hat{\mathbf{x}}}

\def\eR{\mathbf{R}}

\def\jc#1{{}}

\newcommand{\rev}[1]{{\color{blue}#1}}

\begin{document}

\title{From Cycle Space to Cycle Manifold:\\ Limits and Achievability of Blind False Data Injection Attacks}

\IEEEaftertitletext{}
\author{\IEEEauthorblockN{
\textbf{Xin Li}\IEEEauthorrefmark{1},
\textbf{Chenhan Xiao}\IEEEauthorrefmark{2}, 
\textbf{Jonathan Cohen}\IEEEauthorrefmark{1}, 
\textbf{Aviad Elyashar}\IEEEauthorrefmark{3}\IEEEauthorrefmark{4}, 
\textbf{Yang Weng}\IEEEauthorrefmark{2},
\textbf{Rami Puzis}\IEEEauthorrefmark{1}\IEEEauthorrefmark{4}
}

\IEEEauthorblockA{\IEEEauthorrefmark{1}}Faculty of Computer and Information Science, Ben-Gurion-University, Be'er Sheva, Israel \\
\IEEEauthorblockA{\IEEEauthorrefmark{2}School of Electrical, Computer and Energy Engineering, Arizona State University, AZ, USA}\\
\IEEEauthorblockA{\IEEEauthorrefmark{3}Department of Computer Science, Shamoon College of Engineering, Be'er Sheva, Israel}\\
\IEEEauthorblockA{\IEEEauthorrefmark{4}Cyber@BGU, Ben-Gurion University of Negev, Be'er Sheva, Israel}\\
}

\maketitle

\begin{abstract}

A false data injection attack (FDIA) can change the estimated grid state while
evading a residual-based bad-data detector (BDD).  Existing blind attacks
learn a low-rank measurement subspace, but this algebraic view does not state
the physical grid constraints that make an attack stealthy or the minimum
information needed to recover the complete attack space.  Under the connected
direct-current (DC) branch-flow model, we show that the residual-sensitive subspace of the
noiseless orthogonal test is exactly the weighted cycle space.  Its orthogonal
complement is therefore the complete stealthy attack space, making weighted
cycle-space knowledge both necessary and sufficient for complete blind FDIA.
This space identifies the
topology only up to 2-isomorphism and the relative cycle-edge parameters only
up to one  scale per biconnected component; bridge parameters are neither
identified nor required.  We then formulate a computationally unconstrained
benchmark and a tractable measurement-only reconstruction method.  Experiments
on IEEE systems compare BDD bypass rate at a 95\% nominal-acceptance threshold
against state impact.  As a compact alternating-current (AC) extension, we characterize feasible
 branch $P/Q$ measurements by a cycle manifold and demonstrate
topology-assisted manifold fitting and measurement generation on a graphics
processing unit (GPU).  In
the lossless fixed-voltage small-angle limit, the normal space of the
active-power slice reduces to the DC weighted cycle space.

\end{abstract}

\begin{IEEEkeywords}
Blind false data injection attack, cycle space, direct-current power flow, power system state estimation, topology identification.
\end{IEEEkeywords}


\acresetall
\section{Introduction}
Power system state estimation turns  measurements into the bus states used for grid monitoring and control, while a residual-based
\ac{bdd} rejects measurements that are inconsistent with the estimated state \cite{abur2004power}. 
On the attacker's side, given sufficient system information about the power grid, an \ac{fdia} can bias the state without causing a corresponding increase in residuals, so the control center may accept an incorrect view of the grid \cite{liu2009false,hug2012vulnerability}.
Luckily, the power grid's system information is well protected.
So the attacker cannot easily inject a false measurement without an \ac{bdd} alarm.
The most feasible information for the attacker is the time series measurement.
The measurement contains the normal operation of the power grid.
Extracting useful information to craft a stealthy injection from normal measurement is called blind \ac{fdia}.
blind \ac{fdia} is the most realistic threat to the modern power grid because it requires minimal information.

Blind \ac{fdia} replaces knowledge of the system with historical measurements.
Existing methods estimate the attack space through subspace learning
\cite{kim2014subspace}, PCA \cite{yu2015blind}, random-matrix perturbation
\cite{lakshminarayana2020data}, or matrix reconstruction
\cite{yang2022blind}. 
These methods show that a low-rank measurement structure can support blind \ac{fdia}, but the learned object typically captures only a small subset of the attack directions rather than the complete attack space.

Besides these subspace methods, some topology and physics-related work is also proposed.
Martin et al.~\cite{higgins2021topology} try to estimate the full system information using the method in \cite{zhang2020topology}.
The information required to estimate the full system is not only the branch power flow but also the state node voltage.
Chin et al.~\cite{chin2017blind} propose a linear single attack direction estimation method with the assumption that the difference of state deviation is small.
Some deep learning and machine learning methods have also been proposed for blind \ac{fdia}. But these methods have a minor influence.

Since the early development of \ac{fdia}~\cite{abur2004power}, a large body of research has proposed increasingly sophisticated approaches for constructing \ac{fdia}.
Yet several fundamental questions remain insufficiently understood: What is the physical origin of \ac{fdia} stealthiness? What underlying structure gives residual-based \ac{bdd} its protection capability? And what fundamentally determines the limits and achievability of  \ac{fdia}?
In this paper, we address these questions from a unified structural perspective and develop a deeper understanding of both \ac{bdd} and \ac{fdia}.
We show that the key lies in the network's cycle structure: in the DC model, this structure appears as the linear cycle space, while in the AC model, it generalizes to a nonlinear cycle manifold.
From this viewpoint, residual protection, stealthy perturbations, and the fundamental limits of blind attacks can all be understood through the same underlying principle.

Cycle physics is established separately in the power system and graph literature.
Cycle variables express power-flow consistency \cite{horsch2018linear,farivar2013branch}, while the graph cycle space determines topology only up to 2-isomorphism
\cite{whitney19332,gross2018graph}.
These works do not establish the relationship between \ac{fdia} and cycle structures.

In this paper, we establish the explicit physical constraints for both DC and AC power models. 
We reveal that the cycle structure is the fundamental component for both AC and AC models.
In the DC model, the \ac{fdia} is constrained by the weighted cycle space.
In the AC model, the \ac{fdia} is constrained by the cycle manifold.
We prove that the weighted cycle space and cycle manifold are sufficient and necessary conditions to perform a complete \ac{fdia} for DC and AC models, respectively.
We therefore study the limits and achievability of blind \ac{fdia} using these physical constraints.
With these theories and studies, we not only provide a new tool for \ac{fdia} research but also offer new insights into power grid planning and protection.

The main contributions are as follows:
\begin{itemize}
    \item \textbf{Physical origin of \ac{fdia} stealth.}
    We prove that the weighted cycle space is the exact \footnote{Exact means necessary and sufficient} orthogonal complement of the complete attack space in the DC model.
    For the AC model, we propose the theory of \acl{cm} and prove that it is the exact nonlinear constraint for the complete attack space. 
    With linearization and a small voltage angle assumption, the \acl{cm} can be transformed to the constraint in weighted \acl{cs}.

    \item \textbf{Limits and achievability}
    Estimating the weighted \acl{cs} and \acl{cm} both need a cycle basis and a parameter estimation process.
    We develop a tractable measurement-only DC weighted \acl{cs} reconstruction method.
    With topology and measurement information, we can obtain an upper bound for \acl{cs} estimation.
    For the AC model, we show that it is almost impossible to estimate a correct cycle basis because it is a nonlinear combinatorial problem.
    With topology and measurement information, we develop a zero-bias \acl{cm} estimation method. 
    Both methods outperform the existing blind \ac{fdia} methods.

    \item \textbf{Planning and protection implications.}
    Given the limits and achievability, we provide several guidelines for power grid planning and protection.
    
\end{itemize}

Our earlier work uses cycle-space reconstruction for detection and studies its reconstruction and generalization errors \cite{li2026cycle}.
Here, we use the cycle view to establish the physical origin of the \ac{fdia}, the minimum attacker information, the computational limit, and DC/AC attack realizations.


\section{Preliminaries}
\label{sec:preliminaries}

\subsection{State estimation, \ac{bdd} and blind \ac{fdia}}

We represent the operational transmission grid by a graph $\G=\graph(\E,\V)$ with $\dimz=|\E|$ branches and $\dimx=|\V|$ buses. $q=\dimz-\dimx+1$ is the number of cycles in the grid.

Under the AC model, the reference-bus voltage magnitude and angle are fixed.
Let $\x\in\R^{2\dimx}$ contain the voltage magnitudes and angles.
The measurements satisfy
\begin{equation}
\label{eq:ac_measurement_model}
\z=\h(\x)+\noise,
\end{equation}
where $\h(\cdot)$ is the nonlinear AC measurement function and $\noise$ is
measurement noise.  The weighted least-squares state estimate is
\begin{equation}
\label{eq:ac_state_estimation}
\ex=\arg\min_{\x}
\bigl[\z-\h(\x)\bigr]^{T}\eR^{-1}
\bigl[\z-\h(\x)\bigr],
\end{equation}
where $\eR$ is the measurement-noise covariance. 
The \ac{bdd} uses
\begin{equation}
\label{eq:ac_bdd_statistic}
J_{\mathrm{AC}}(\z)
=\bigl[\z-\h(\ex)\bigr]^{T}\eR^{-1}
\bigl[\z-\h(\ex)\bigr].
\end{equation}
It rejects the measurements when $J_{\mathrm{AC}}(\z)>\thresh$ and accepts them otherwise.

Under the DC approximation, the active branch-flow measurement on line $(\busi,\busj)$ is
\begin{equation}
\label{eq: dc_power_measurement}
\P_{\busi\busj}=\frac{\angle_{\busi\busj}}{\reac_{\busi\busj}},
\end{equation}
where $\angle_{\busi\busj}=\angle_\busi-\angle_\busj$.
Stacking all branch-flow measurements gives
\begin{equation}
\label{eq: dc_power_flow}
\z=\H\angle+\noise
\end{equation}
where $\angle\in\R^{\dimx}$ is the voltage-angle state and $\noise\in\R^{\dimz}$ is measurement noise.
The Jacobian decomposes as
\begin{equation}
\label{eq: H_decomposition}
\H=\A\odot \p 
\end{equation}
where $\p\in \R^{\dimz}$ contains line susceptances, $\odot$ denotes row-wise element-wise multiplication, and $\A\in \R^{\dimz\times \dimx}$ is the topology incidence matrix:
\begin{equation}
\label{eq: incidence_matrix}
\A_{\busi\busj}=
\left\{
\begin{aligned}
&1\quad \text{if }\v_\busj\text{ is the initial node of link }\e_\busi, \\
&-1 \quad \text{if }\v_\busj\text{ is the terminal node of link }\e_\busi, \\
&0 \quad \text{otherwise}.
\end{aligned}
\right.
\end{equation}
The weighted least-squares state estimate is tested by the standard residual sensitivity matrix
\begin{equation}
\label{eq:bdd_residual_matrix}
\S = \mathbf{I} - \H(\H^\top \eR^{-1} \H)^{-1} \H^{\top}\eR^{-1},
\end{equation}
where $\eR$ is the measurement-noise covariance.
In the noiseless case, $\S_o = \mathbf{I} - \H(\H^\top \H)^{-1}\H^\top$ is the orthogonal projector onto $\nul(\H^T)$.
Hence an injected vector $\z_a$ evades the residual test whenever $\z_a\in\row(\H^T)$.

$\H$ has rank $\dimx-1$ for a connected grid after the reference angle is fixed, so $\operatorname{dim}\nul(\H^T)=q$.

\paragraph{Blind \ac{fdia}}

A blind \ac{fdia} attacker observes historical branch-flow measurements but does not know the system information.
The attacker is successful if it can construct a nonzero vector $\z_a$ such that the attacked measurement $\z+\z_a$ remains below the \ac{bdd} threshold while changing the estimated state.
In the noiseless DC model, this means constructing $\z_a\in\row(\H^T)$ without direct access to $\H$.
Existing blind methods estimate this row space statistically, for example, through subspace estimation, \ac{pca}-type factorization, or partial topology-assisted learning \cite{kim2014subspace,yu2015blind,yang2022blind,lakshminarayana2020data,higgins2021topology}.

\subsection{Cycle space and 2-isomorphism}

$\H$ can be regarded as a weighted incidence matrix of $\A$.
The cycle space property of $\A$ has been well studied in graph theory \cite{biggs1997algebraic, godsil2013algebraic, diestel2005cycle}.

\begin{definition}[Cycle terminology]\label{def:cycle}
A cycle is a closed walk with no repeated vertices except that its first and last vertices are the same, represented by its edge set $\c\subseteq\E$.
A spanning tree $S_\G$ is a cycle-free connected subgraph containing all buses.
Edges outside $S_\G$ are chords $\mathcal{C}_\G$.
Each chord together with $S_\G$ induces a fundamental cycle.
\end{definition}

\begin{definition}[Biconnected component]\label{def:biconnected_component}
A biconnected component (also known as a 2-connected component or block) of an undirected graph $\G$ is a maximal subgraph in which the removal of any single node does not disconnect the subgraph.
Equivalently, a biconnected component is a subgraph in which every pair of nodes is connected by at least two internally node-disjoint paths.
In this paper, we denote a biconnected component by $\bi = (\V_\bi, \E_\bi)$, where $\V_\bi$ and $\E_\bi$ are the node and edge sets of the component, respectively.
\end{definition}

\begin{definition}[Cycle space]\label{def:cycle_space}
The cycle space of $\G$ is $\C(\G)= \nul(\A^T) \subseteq\R^{\dimz\times q}$.
It is spanned by signed cycle-indicator vectors $\chi_{\C}\in\R^{\dimz}$, whose entries are $+1$ or $-1$ on oriented cycle edges and $0$ otherwise.
All possible cycle space of $\G$ forms the cycle space set $\cyc$
\end{definition}

There are $q$ linearly independent cycles (that are fundamental cycles).
We denote by $F_\C$ a fundamental cycle set associated with a spanning tree.

\begin{definition}[2-isomorphic graphs]\label{def:2_isomorphic}
Two graphs are strictly 2-isomorphic if there is a one-to-one edge correspondence under which cycles correspond to cycles \cite{whitney19332}.
Equivalently, two graphs are 2-isomorphic iff they have the same cycle space \cite{gross2018graph}.
\end{definition}

\section{Minimum Physical Information for Complete \ac{fdia}}
\label{sec:mini_info}

\subsection{DC weighted \acl{cs}}

For a fixed Jacobian $\H$, define the noiseless stealthy attack space as
\begin{equation}
    \label{eq:complete_attack_space}
    \mathcal{A}(\H)=\row(\H^T).
\end{equation}
An attacker has complete \ac{fdia} capability if it can recover the entire subspace.

\begin{definition}[Weighted cycle space]
\label{def:weighted_cycle_space}
Let $\mathbf{D}_{\p}=\operatorname{diag}(\p)$.  For an oriented cycle
$\c$, its weighted signed indicator is
\begin{equation}
    \label{eq:weighted_signed_indicator}
    \chi_{\c,w}=\mathbf{D}_{\p}^{-1}\chi_{\c}.
\end{equation}
The weighted cycle space is
\begin{equation}
    \C_w(\G)=
    \operatorname{span}\{\chi_{\c,w}:\c\text{ is a cycle of }\G\}.
\end{equation}
\end{definition}

\begin{theorem}[DC weighted cycle space completeness]
\label{theorem:null_H_weighted_cycle}
\label{theorem:minimum_complete_fdia}
Assume that $\G$ is connected, every entry of $\p$ is nonzero, and
$\H=\mathbf{D}_{\p}\A$.  The following statements are equivalent:
\begin{enumerate}
    \item the complete stealthy attack space $\mathcal{A}(\H)$ is known;
    \item the weighted cycle space $\C_w(\G)$ is known;
    \item a cycle basis and block-consistent relative line parameters on its cycle edges are known.
\end{enumerate}
In particular,
\begin{equation}
    \label{eq:null_H_weighted_cycle}
    \nul(\H^T)=\C_w(\G),
    \qquad
    \mathcal{A}(\H)=\C_w(\G)^\perp .
\end{equation}
For a cycle basis $F_{\C}=\{\c_1,\ldots,\c_q\}$ and any invertible
$\mathbf{R}\in\R^{q\times q}$, let
\begin{equation}
    \label{eq:weighted_basis_invariance}
    \begin{aligned}
    \mathbf{N}_w
    &=[\chi_{\c_1,w},\ldots,\chi_{\c_q,w}],\\
    \operatorname{col}(\mathbf{N}_w\mathbf{R})
    &=\operatorname{col}(\mathbf{N}_w)=\C_w(\G).
    \end{aligned}
\end{equation}
The attacker requires $\operatorname{col}(\mathbf{N}_w)$.
$\mathbf{N}_w$ and $\mathbf{N}_w\mathbf{R}$ contain identical information.
\end{theorem}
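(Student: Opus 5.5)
The plan is to prove the identity \eqref{eq:null_H_weighted_cycle} first and then read the three equivalences off it.

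\textbf{Inclusion $\C_w(\G)\subseteq\nul(\H^T)$.} Take any oriented cycle $\c$. Then
\[
\H^T\chi_{\c,w}=\A^T\mathbf{D}_{\p}\mathbf{D}_{\p}^{-1}\chi_{\c}=\A^T\chi_{\c}.
\]
This vanishes because every vertex on $\c$ has exactly one incoming and one outgoing oriented cycle edge, so the signed sum at each bus cancels. That is exactly the statement $\chi_{\c}\in\C(\G)=\nul(\A^T)$ from Definition~\ref{def:cycle_space}.

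\textbf{Dimension count.} Since $\mathbf{D}_{\p}$ is invertible, $\nul(\H^T)=\mathbf{D}_{\p}^{-1}\nul(\A^T)$. Connectivity gives $\rank\A=\dimx-1$, hence $\dim\nul(\H^T)=q$, which is the fact recorded in the Preliminaries.

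\textbf{Spanning.} I need $q$ independent weighted cycle vectors. Fix a spanning tree $S_\G$ and its fundamental cycles $F_\C$. Each fundamental cycle is the only one containing its chord, so the vectors $\chi_{\c_i}$ are linearly independent: restricted to the chord coordinates they form a signed identity block. Multiplying by $\mathbf{D}_{\p}^{-1}$ preserves independence, so $\C_w(\G)$ has dimension at least $q$. Combined with the inclusion, this gives $\nul(\H^T)=\C_w(\G)$.

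\textbf{Attack space.} The identity $\mathcal{A}(\H)=\row(\H^T)=\nul(\H^T)^\perp$ is the fundamental theorem of linear algebra. The invariance \eqref{eq:weighted_basis_invariance} holds because $\mathbf{R}$ invertible implies $\operatorname{col}(\mathbf{N}_w\mathbf{R})=\operatorname{col}(\mathbf{N}_w)$. A $q$-dimensional subspace determines and is determined by any basis matrix only up to such an $\mathbf{R}$, which is the sense in which $\mathbf{N}_w$ and $\mathbf{N}_w\mathbf{R}$ carry identical information.

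\textbf{Equivalence 1$\Leftrightarrow$2.} This follows because orthogonal complementation is an involution on subspaces of $\R^{\dimz}$.

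\textbf{Direction 3$\Rightarrow$2.} A cycle basis gives the signed indicators $\chi_{\c_i}$. Relative parameters on the cycle edges determine $\mathbf{D}_{\p}^{-1}\chi_{\c_i}$ up to one scalar per biconnected component. Every cycle lies inside a single block, so each column of $\mathbf{N}_w$ is rescaled by a nonzero constant, which is a diagonal $\mathbf{R}$ and leaves the span unchanged. Bridge edges lie on no cycle, so they never enter $\mathbf{N}_w$.

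\textbf{Direction 2$\Rightarrow$3: the main obstacle.} Here I must recover a cycle basis and block-consistent ratios from the bare subspace $\C_w(\G)$. I would proceed in three steps.

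\begin{itemize}
\item[(a)] The support pattern is scale-free, since $\mathbf{D}_{\p}^{-1}$ does not change zero patterns. The minimal-support nonzero vectors of $\C_w(\G)$ are therefore exactly the weighted indicators of single cycles; these are the circuits of the cycle matroid. This yields a cycle basis, fixed up to 2-isomorphism.
\item[(b)] Each such vector $v=\mathbf{D}_{\p}^{-1}\chi_{\c}$ has entries $\pm1/\p_e$. Ratios within one cycle therefore give $\p_e/\p_{e'}$ for $e,e'\in\c$.
\item[(c)] Cycles sharing an edge chain these ratios together. Inside a biconnected component any two edges lie on a common cycle, so all ratios within the block are consistent and determined. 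Across blocks no cycle links them, so exactly one free scale per block remains.
\end{itemize}

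The delicate point is step (a): I must justify that the minimal supports are cycles and not unions. I would argue that a vector of $\nul(\A^T)$ with minimal support induces a subgraph with every vertex of degree at least $2$ and no proper cycle-supported subvector, which forces that subgraph to be a single cycle.
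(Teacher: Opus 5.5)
Your proposal is correct and follows the paper's proof essentially step for step. Both arguments use the inclusion via $\mathbf{A}^T\chi_c=\mathbf{0}$, independence of the $q$ fundamental-cycle vectors together with $\dim\mathbb{N}(\mathbf{H}^T)=q$, orthogonal complementation for $1\Leftrightarrow 2$, and within-cycle ratios aligned inside each biconnected component for $2\Leftrightarrow 3$. Where you go further is the direction $2\Rightarrow 3$, which the paper only asserts: your minimal-support (matroid-circuit) argument shows how the individual weighted cycle vectors are read off the bare subspace, and the fact that any two edges of a block lie on a common cycle replaces the paper's chaining through shared branches.
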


Minimum refers to the necessary and sufficient conditions for recovering the entire subspace.

\begin{proof}
For every cycle $\c$, $\A^T\chi_{\c}=\mathbf{0}$.  Hence
\begin{equation}
    \label{eq:cycle_kernel}
    \H^T\chi_{\c,w}
    =\A^T\mathbf{D}_{\p}\mathbf{D}_{\p}^{-1}\chi_{\c}
    =\mathbf{0}.
\end{equation}
The $q$ fundamental-cycle indicators are independent.
Multiplication by the invertible matrix $\mathbf{D}_{\p}^{-1}$ preserves
independence, while
$\operatorname{dim}\nul(\H^T)=q$.  Therefore they form a basis of
$\nul(\H^T)$, which proves the first equality in
\cref{eq:null_H_weighted_cycle}.  The second follows by taking the
orthogonal complement.  Consequently, knowing $\C_w(\G)$ gives the entire
attack space, and knowing the entire attack space gives $\C_w(\G)$.
This proves the equivalence of the first two statements.

On each cycle, the nonzero entries of its weighted indicator determine the relative line parameters.  Shared branches align these ratios within each
biconnected component.
Thus, the weighted cycle space gives the third
statement. 
Conversely, the cycle basis and block-consistent relative parameters form the weighted indicators in
\cref{eq:weighted_signed_indicator}.
Their span is $\C_w(\G)$.  
Therefore, the third statement also gives the second.  
Finally, right multiplication by an invertible matrix changes only the basis coordinates, proving
\cref{eq:weighted_basis_invariance}.
\end{proof}

\begin{corollary}[Identifiability]
\label{corollary:Identifiability_DC}
Complete \ac{fdia} does not require distinguishing graphs within a 2-isomorphism class, identifying the parameter scale of a biconnected component, or identifying bridge parameters.
\end{corollary}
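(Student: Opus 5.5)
By Theorem~\ref{theorem:minimum_complete_fdia}, complete \ac{fdia} capability is equivalent to knowing $\C_w(\G)=\nul(\H^T)$, since $\mathcal{A}(\H)=\C_w(\G)^\perp$. So the plan is to exhibit, for each of the three ambiguities, a second system $(\G',\p')$ with $\C_w(\G')=\C_w(\G)$, after identifying edges. Then any attacker who recovers $\C_w$ has the full attack space without resolving that ambiguity. Each claim reduces to an invariance statement about the subspace $\operatorname{span}\{\mathbf{D}_{\p}^{-1}\chi_{\c}\}$.

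\textbf{2-isomorphism.} Let $\G'$ be 2-isomorphic to $\G$ under an edge bijection. By Definition~\ref{def:2_isomorphic}, the two graphs have the same cycle space, i.e.\ $\nul(\A^T)=\nul(\A'^T)$ as subspaces of $\R^{\dimz}$, for suitable edge orientations. Keep the same $\p$ on corresponding edges. The weighted space is then $\mathbf{D}_{\p}^{-1}\nul(\A^T)$ in both cases, so $\C_w$ coincides. Hence $\mathcal{A}(\mathbf{D}_{\p}\A)=\mathcal{A}(\mathbf{D}_{\p}\A')$, and distinguishing $\G$ from $\G'$ is unnecessary. One technical point must be settled: the signed cycle spaces agree, not merely the unsigned ones. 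I would invoke the stated equivalence in Definition~\ref{def:2_isomorphic} directly, after reorienting edges if needed. Flipping an edge orientation together with the sign of the corresponding row leaves $\C_w$ unchanged up to a diagonal $\pm1$ relabeling.

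\textbf{Block scale and bridges.} Every cycle lies entirely within one biconnected component $\bi$. Therefore $\C_w(\G)=\bigoplus_{\bi}\C_w(\bi)$, where each summand is supported on $\E_\bi$, and no cycle ever contains a bridge. Now replace $\p$ by $\p'$ with $\p'_e=\alpha_\bi\p_e$ for $e\in\E_\bi$, where $\alpha_\bi\neq0$, and with $\p'_e$ arbitrary and nonzero on each bridge. Each generator $\mathbf{D}_{\p}^{-1}\chi_{\c}$ is then multiplied by the scalar $\alpha_\bi^{-1}$ of its block, and bridge entries stay zero. The span is therefore unchanged, so $\C_w$, and hence $\mathcal{A}$, is the same. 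Equivalently, $\mathbf{N}_w'=\mathbf{N}_w\mathbf{R}$ with $\mathbf{R}$ diagonal and invertible, so \cref{eq:weighted_basis_invariance} applies directly.

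The main obstacle is the decomposition claim that every cycle lies inside a single block and avoids bridges, together with the sign bookkeeping in the 2-isomorphism step. For the decomposition, I would use the standard fact that any two edges on a common cycle belong to the same block, and that bridges lie on no cycle. This can be cited from \cite{diestel2005cycle}.
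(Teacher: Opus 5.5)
Your proposal is correct and follows essentially the paper's argument. It reduces, via Theorem~\ref{theorem:minimum_complete_fdia}, to invariance of $\C_w(\G)$, uses that 2-isomorphic graphs share the cycle space, and uses that every cycle lies in one biconnected component and avoids bridges, so blockwise scalings and bridge parameters leave the span unchanged. The only difference is that you are more explicit: you construct the alternative system $(\G',\p')$ in each case and write the block rescaling as $\mathbf{N}_w\mathbf{R}$ with diagonal invertible $\mathbf{R}$, whereas the paper states these invariances in a few lines.
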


\begin{proof}
$\C_w(\G)$ determines the unweighted cycle
space and therefore the topology only up to 2-isomorphism.
Within each biconnected component $\bi$, the weighted cycle vectors determine relative
line parameters, but not their scale:
\begin{equation}
    \label{eq:bi_component_parameter}
    \p_{\bi}=\const_{\bi}\hat{\p}_{\bi}.
\end{equation}
This scale does not change $\C_w(\G)$. 
A bridge is in no cycle, so its parameter does not appear in the weighted cycle space and is not needed for complete \ac{fdia}.
\end{proof}

The cycle basis is also not unique. 
Every valid basis spans the same weighted cycle space in the noiseless case, as shown by
\cref{eq:weighted_basis_invariance}.

\subsection{AC \acl{cm}}

Let $s_e=P_e+\mathrm{j}Q_e$ be the complex power stored at the from end of
branch $e=(i,j)$, and let $u_i=|V_i|^2$.  With the branch current $I_e=Y_{ff,e}V_i+Y_{ft,e}V_j$, the conjugate branch-power equation gives
\begin{equation}
\overline{s_e}=Y_{ff,e}u_i+Y_{ft,e}\overline{V_i}V_j.
\end{equation}
where $Y_{ff,e}$ is the branch's from-end self-admittance.
$Y_{ft,e}$ is the transformer or mutual admittance from the to-end voltage $V_j$ to the from-end current $I_e$.
Hence, when $V_i\neq0$ and $Y_{ft,e}\neq0$, the branch voltage ratio is

\begin{align}
    \label{eq:ac_branch_ratio}
    \rho_e\equiv\frac{V_j}{V_i}
    =A_e+B_e\frac{\overline{s_e}}{u_i},
    \qquad\\ \notag
    A_e=-\frac{Y_{ff,e}}{Y_{ft,e}},\quad
    B_e=\frac{1}{Y_{ft,e}}.
\end{align}

Let $\mathbf C\in\{-1,0,1\}^{q\times\dimz}$ be an oriented fundamental-cycle matrix. 
Voltage ratios telescope around each cycle,
so every feasible AC measurement satisfies
\begin{equation}
    \label{eq:ac_cycle_holonomy}
    \mathcal H_k
    \equiv\prod_{e=1}^{\dimz}\rho_e^{C_{ke}}-1=0,
    \qquad k=1,\ldots,q.
\end{equation}
Each equation in \cref{eq:ac_cycle_holonomy} is complex and therefore gives
two real constraints.

\begin{theorem}[AC cycle manifold completeness]
\label{theorem:ac_cycle_manifold}
Fix the reference-bus voltage magnitude and angle.
Suppose that the grid is connected, the bus voltages and $Y_{ft,e}$ are nonzero, and the full AC measurement Jacobian has rank $2\dimx-2$, and the $P/Q$ Jacobian on every spanning tree is nonsingular. 
The noiseless measurement set is a manifold of dimension $2\dimx-2$.
The chord measurements obey a map:
\begin{equation}
    \label{eq:ac_completion_map}
    \mathbf z_C=F_T(\mathbf z_T),
    \qquad \mathbf z_C\in\R^{2q}.
\end{equation}
Zero AC residual, \cref{eq:ac_completion_map}, and the
$q$ complex cycle equations in \cref{eq:ac_cycle_holonomy} are equivalent.
\end{theorem}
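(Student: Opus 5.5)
The plan is to treat the noiseless measurement set as the image $\mathcal M=\{\h(\x)\}$ of the reduced state space. Fixing the reference magnitude and angle leaves $2\dimx-2$ free coordinates. I will also read the "zero AC residual" condition as $\z\in\overline{\mathcal M}$, meaning some state reproduces $\z$ exactly. All three characterizations then describe the same set, so I will prove a cycle of inclusions. I restrict throughout to branch $P/Q$ measurements, so $\z\in\R^{2\dimz}$, and I split a spanning tree $S_\G$ from its chords $\mathcal C_\G$, giving $\z=(\mathbf z_T,\mathbf z_C)$ with $\mathbf z_T\in\R^{2(\dimx-1)}$ and $\mathbf z_C\in\R^{2q}$.

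\textbf{Step 1 (manifold and completion map).} I will show that the tree measurements determine the state, by propagating outward from the reference bus. At the reference bus $V_{\mathrm{ref}}$ is known, so $u_i$ is known. For a tree branch $e=(i,j)$ leaving an already-determined bus, \cref{eq:ac_branch_ratio} gives $V_j=\rho_eV_i$ explicitly from $s_e$, provided the branch is oriented away from the reference bus. If it is oriented the other way, I will instead use the to-end measurement, or invoke the nonsingularity of the tree $P/Q$ Jacobian through the implicit function theorem. This gives a smooth map $\mathbf z_T\mapsto\x$, so the map $\x\mapsto\mathbf z_T$ is a local diffeomorphism. Composing with the chord measurement functions defines $F_T(\mathbf z_T)=\h_C(\x(\mathbf z_T))$. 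Then $\mathcal M$ is the graph of $F_T$, and is therefore an embedded manifold of dimension $2\dimx-2$. The rank assumption on the full Jacobian ensures this is consistent with the immersion $\x\mapsto\h(\x)$.

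\textbf{Step 2 (equivalences).} The argument runs through the following inclusions.
\begin{itemize}
\item \emph{Zero residual implies \cref{eq:ac_completion_map}.} If $\z=\h(\x)$, then $\mathbf z_T$ reproduces $\x$, and hence $\mathbf z_C=F_T(\mathbf z_T)$.
\item \emph{\Cref{eq:ac_completion_map} implies zero residual.} Take $\x=\x(\mathbf z_T)$; then $\h(\x)=\z$ by construction.
\item \emph{Zero residual implies the cycle equations \cref{eq:ac_cycle_holonomy}.} Every $\rho_e$ equals $V_j/V_i$ for the true voltages, which are nonzero, so the product around each oriented cycle telescopes to $1$.
\item \emph{The cycle equations imply zero residual.} This is the substantive direction, taken up in Step 3.
\end{itemize}

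\textbf{Step 3 (the key direction).} Given $\z$ satisfying the $q$ fundamental-cycle equations, I define bus voltages by integrating the ratios along $S_\G$. Set $V_{\mathrm{ref}}$ as fixed and $V_j=V_i\rho_e$ along each tree edge. This is well defined because each path in the tree is unique. For each chord $e=(i,j)$, the fundamental cycle equation states exactly that the product of tree ratios around the cycle equals $\rho_e$, so $V_j/V_i=\rho_e$ on the chord as well.

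It remains to recover the measured $s_e$ from these voltages. Inverting \cref{eq:ac_branch_ratio} gives $\overline{s_e}=u_i(\rho_e-A_e)/B_e=Y_{ff,e}u_i+Y_{ft,e}\overline{V_i}V_j$, which is the branch-power equation.

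\textbf{Main obstacle.} The hard point is that $u_i$ inside $\rho_e$ must equal $|V_i|^2$ for the voltages just constructed. The ratios only fix voltages up to consistency of magnitudes, and $\rho_e$ itself depends on the measured $u_i$ through the from-end normalization. I will handle this by carrying the voltage magnitudes along: the reference value $u_{\mathrm{ref}}$ is fixed, and along the tree $|V_j|^2=|\rho_e|^2u_i$ recursively. The identity $u_i=|V_i|^2$ then holds by induction from the reference bus. Because $\rho_e$ for each branch uses $u_i$ at its from bus, which has already been determined, the recursion closes. Finally, $q$ complex equations give $2q$ real constraints, matching the codimension $2\dimz-(2\dimx-2)=2q$. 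This count, together with the local diffeomorphism from Step 1, confirms that no spurious components arise beyond the reconstructed state, assuming the $\rho_e$ are defined so that the recursion stays unambiguous.
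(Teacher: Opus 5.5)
Your proposal follows essentially the same route as the paper. You recover the state from the tree measurements (the paper's $g_T$), use each fundamental-cycle equation to force the chord ratio to equal $V_j/V_i$ of that state, invert the branch-ratio relation to get chord powers $h_C(g_T(\mathbf z_T))$, and obtain necessity by telescoping. Your explicit tree integration and the induction $u_i=|V_i|^2$ only make this more concrete, with two caveats: for tree branches oriented toward the reference bus the forward recursion does not close as claimed, so there you must use the Step~1 inverse from the implicit function theorem, which is local only, as the paper implicitly does; and the closing dimension count is unnecessary.
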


\begin{proof}
Let $\h_T$ and $\h_C$ denote the tree and chord parts map between the AC measurement and state. 
The Jacobian of
$\h_T:\R^{2\dimx-2}\rightarrow\R^{2\dimx-2}$ is nonsingular.
The tree measurements are sufficient to determine the voltage state $\x=g_T(\mathbf z_T)$.
and
\[
F_T(\mathbf z_T)=\h_C\!\left(g_T(\mathbf z_T)\right)
\] is a chord completion map.

Necessity follows because $\rho_e=V_j/V_i$, the voltage ratios multiply by one around every closed cycle.

For sufficiency, consider measurements $(\mathbf z_T,\mathbf z_C)$ that satisfy all fundamental-cycle equations.
First recover $\x=g_T(\mathbf z_T)$.
The tree measurements then give the tree-edge voltage ratios of this state.
Each fundamental cycle contains one chord $c=(i,j)$ and the unique tree path between buses $i$ and $j$.  
\Cref{eq:ac_branch_ratio} gives
\[
\overline{s_c}
=u_i\frac{\rho_c-A_c}{B_c}.
\]
Because $u_i>0$ and $B_c=1/Y_{ft,c}\neq0$, this ratio determines one unique chord power $s_c$.  
Every chord measurement equals $\h_C(g_T(\mathbf z_T))$, so $\mathbf z_C=F_T(\mathbf z_T)$ and the complete measurement vector is produced by one AC state. 
This proves the equivalence with zero AC residual.
\end{proof}

\begin{corollary}[Identifiability]
\label{corollary:Identifiability_AC}
Under the assumptions of \cref{theorem:ac_cycle_manifold},
The branch parameters must be consistent across cycles in the same
biconnected component.
Their voltage-scaling ambiguity and alignment across different components cannot be identified.
Bridge parameters are also not separately identifiable or
required.
\end{corollary}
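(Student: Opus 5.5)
I would prove Corollary~\ref{corollary:Identifiability_AC} in the same way as Corollary~\ref{corollary:Identifiability_DC}. The idea is to work only with what \cref{theorem:ac_cycle_manifold} says is observable. By that theorem, a zero AC residual is equivalent to the $q$ complex cycle equations \cref{eq:ac_cycle_holonomy}. So the set of feasible noiseless measurements, which is everything an attacker or identifier can exploit, is determined by the functions $\mathcal H_k$. These depend on the branch parameters only through the ratio maps $\rho_e(s_e,u_i)=A_e+B_e\overline{s_e}/u_i$ on cycle edges. The plan is therefore to find the transformations of $(A_e,B_e)$, together with the induced relabelling of states, that leave the zero set of every $\mathcal H_k$ unchanged. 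Each such transformation is an unidentifiable direction, and whatever the cycle equations pin down is a consistency requirement.

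\textbf{Consistency.} Consider two fundamental cycles in the same biconnected component that share an edge $e$. Both products $\prod\rho_e^{C_{ke}}$ contain the same factor $\rho_e$. A reparametrization of $e$ that is accepted by one cycle equation therefore has to be accepted by the other. I would propagate this along a chain of cycles pairwise sharing edges. Such chains exist because any two edges of a block lie on a common cycle. The conclusion is that the parameters must be aligned, or consistent, across all cycles of a block. This mirrors the ``shared branches align these ratios'' step in the proof of \cref{theorem:null_H_weighted_cycle}.

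\textbf{Non-identifiability.} I would exhibit explicit symmetries.
\begin{enumerate}
\item \emph{Voltage scaling within a block.} Rescale the voltages of one side of a cut vertex, or of a whole block, by a complex factor $\alpha$, and absorb the effect into that block's $Y_{ff},Y_{ft}$. The ratios $\rho_e$ on internal edges are unchanged. The measured $s_e$ map consistently, so every holonomy still equals $1$. This gives one undetermined scale per block.
\item \emph{Alignment across blocks.} Different blocks share no cycle, so their equations decouple. Each block's scale can then be chosen independently, and their relative alignment cannot be recovered.
\item \emph{Bridges.} A bridge carries $C_{ke}=0$ for all $k$. Its $(A_e,B_e)$ never enter any $\mathcal H_k$, so these parameters are neither identifiable nor required. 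This is exactly the DC bridge argument.
\end{enumerate}

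\textbf{Main obstacle.} The hardest step is the first item. Unlike the DC case, there is no linear null space to argue from. I would need to check that the scaling of $(Y_{ff},Y_{ft})$ really does preserve the feasible measurement set, including the $u_i$ factors, rather than only preserving the homogeneous holonomy product. I would first try the substitution $V_i\mapsto\alpha V_i$ for all block nodes, with admittances scaled by $|\alpha|^{-2}$. I would then verify directly that $s_e$ is invariant while the cycle equations and the tree completion map $F_T$ stay the same.
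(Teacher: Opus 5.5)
Your outline agrees with the paper's proof. You group the rows of $\mathbf C$ by biconnected component, use $C_{ke}=0$ on bridges, and prove non-identifiability by a joint change of parameters and states that leaves the noiseless measurement set unchanged. The gap is in the concrete symmetry you propose for the step you yourself flag as the main obstacle. Scaling every voltage of a block by one complex $\alpha$, and that block's admittances by $|\alpha|^{-2}$, keeps $A_e$ fixed and does preserve $s_e$ and $\rho_e$ on edges whose two ends are both scaled. But a block shares its cut vertices with other blocks. Take an edge $e=(v,w)$ of a neighbouring block, with $v$ scaled and $w$ not. Its flow $\overline{s_e}=Y_{ff,e}u_v+Y_{ft,e}\overline{V_v}V_w$ becomes $Y_{ff,e}|\alpha|^2u_v+Y_{ft,e}\overline{\alpha}\,\overline{V_v}V_w$. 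The measurement set is preserved only if that edge is also reparametrized, by $Y'_{ff,e}=Y_{ff,e}/|\alpha|^2$ and $Y'_{ft,e}=Y_{ft,e}/\overline{\alpha}$, i.e.\ $A'_e=A_e/\alpha$. That is a complex, orientation-dependent change of $A_e$ in a different block, which your $|\alpha|^{-2}$ rule cannot produce. The same issue arises at the cut-vertex edges in your ``one side of a cut vertex'' variant. A block containing the reference bus cannot be scaled at all, because $V_{\mathrm{ref}}$ is fixed. For the same reason your item~2 (``equations decouple, choose each block's scale independently'') is not yet justified: cycle equations of adjacent blocks share the cut-vertex state $u_v$ through $\rho_e$.

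The missing idea is the paper's per-bus gauge. For arbitrary nonzero $h_i$ with $h_{\mathrm{ref}}=1$, set $W_i=h_iV_i$ and, for $e=(i,j)$, $A'_e=(h_j/h_i)A_e$ and $B'_e=h_j\overline{h_i}B_e$. Equivalently, $Y'_{ff,e}=Y_{ff,e}/|h_i|^2$ and $Y'_{ft,e}=Y_{ft,e}/(h_j\overline{h_i})$. Then $s_e$ is unchanged on every branch and $A'_e+B'_e\overline{s_e}/|W_i|^2=(h_j/h_i)\rho_e=W_j/W_i$, so the factors $h_j/h_i$ telescope around every cycle. This handles interior edges, cut-vertex edges, bridges and the reference at once. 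Your construction is the special case of $h$ piecewise constant, with the boundary edges left out. It also corrects the picture you carried over from the DC corollary. The AC ambiguity is one complex factor per non-reference bus, not one scale per block. So even inside a block the individual $(A_e,B_e)$ are identified only modulo this gauge, which is why GTCM may set $A'_e=1$ on every tree branch.
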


\begin{proof}
Every cycle is contained in one biconnected component, while a bridge is in no cycle.  
The rows of $\mathbf C$ can be grouped by biconnected component, and $C_{ke}=0$ for every bridge $e$.
\Cref{eq:ac_cycle_holonomy} consequently requires branch parameters only among cycles in the same component and places no cycle constraint on a bridge parameter.

The parameters are subject to a voltage-scaling ambiguity.
For any nonzero bus scalars $h_i$, with $h_{\mathrm{ref}}=1$, set
\begin{equation}
    W_i=h_iV_i,\qquad
    A'_e=\frac{h_j}{h_i}A_e,\qquad
    B'_e=h_j\overline{h_i}B_e,
    \quad e=(i,j).
    \label{eq:ac_voltage_gauge}
\end{equation}
Since $|W_i|^2=|h_i|^2u_i$,
\begin{equation}
    A'_e+B'_e\frac{\overline{s_e}}{|W_i|^2}
    =\frac{h_j}{h_i}
      \left(A_e+B_e\frac{\overline{s_e}}{u_i}\right)
    =\frac{W_j}{W_i}.
\end{equation}
The factors $h_j/h_i$ cancel around every cycle.
This transformation changes the coefficient representation, including parameters on bridges and between components, without changing the cycle manifold.
By \cref{theorem:ac_cycle_manifold}, an attacked measurement
$\mathbf z+\mathbf z_a$ has zero
AC residual if and only if, for some sufficiently small
$\Delta\mathbf z_T$,
\begin{equation}
    \label{eq:ac_parameters_attack}
    \mathbf z_a=
    \begin{bmatrix}
        \Delta\mathbf z_T\\
        F_T(\mathbf z_T+\Delta\mathbf z_T)-F_T(\mathbf z_T)
    \end{bmatrix}.
\end{equation}
For $\Delta\mathbf z_T\neq\mathbf0$, the parameters inverse $g_T$ gives a different voltage state. 
The blockwise cycle manifold is the identifiable object required for
complete \ac{fdia}.
\end{proof}

The DC result in \cref{corollary:Identifiability_DC} is the linear example:
2-isomorphic graphs, biconnected-component parameter scales, and bridge
parameters are indistinguishable because they leave the weighted cycle space,
and therefore the complete attack space, unchanged.

\subsection{Connection between AC \acl{cm} and DC \acl{cs}}

Under the flat voltage, small angle assumption, $\rho_e \approx 1-\mathrm {j}x_eP_e$. 
Taking the first-order logarithm of \cref{eq:ac_cycle_holonomy} yields
\begin{equation}
    \label{eq:ac_cycle_linearization}
    \mathbf C\mathbf D_{\mathbf x}\,\delta\mathbf P=\mathbf0,
    \qquad
    \mathbf D_{\mathbf x}
    =\operatorname{diag}(\mathbf x)=\mathbf D_{\p}^{-1}.
\end{equation}
Let $\mathcal M_{\mathrm{AC}}^{P}$ denote this fixed-voltage active power of the AC cycle manifold, with the flat point represented by $\mathbf P=\mathbf0$. 
The columns of $\mathbf D_{\mathbf x}\mathbf C^T$ are the weighted cycle indicators in \cref{eq:weighted_signed_indicator}. 
Its tangent and normal spaces are
\begin{equation}
    \label{eq:ac_dc_tangent_normal}
    \begin{aligned}
    T_{\mathbf0}\mathcal M_{\mathrm{AC}}^{P}
        &=\nul(\mathbf C\mathbf D_{\mathbf x})
          =\C_w(\G)^\perp
          =\mathcal A(\H),\\
    N_{\mathbf0}\mathcal M_{\mathrm{AC}}^{P}
        &=\operatorname{col}(\mathbf D_{\mathbf x}\mathbf C^T)
          =\C_w(\G).
    \end{aligned}
\end{equation}
The DC \acl{cs} is the active-power tangent space of the AC cycle manifold.
The DC cycle-space model is the first-order approximation of the AC cycle manifold.

A blind \ac{fdia} attacker only has information about the measurement.
Learn \ac{cs} and \ac{cm} from measurement is a physically informed machine learning approach.
In the next two sections, we are going to propose the blind \ac{fdia} realization method based on \ac{cs} and \ac{cm}.
There are two stages: cycle-based estimation and parameter estimation.
Cycle-based estimation is a combinatorial problem.
Solving such a problem brutally is NP-hard and requires almost infinite time to optimize on current electronic computers.
We are going to realize two blind \ac{fdia}:
Computationally Constrained Blind \ac{fdia} Realization, which requires only the measurement information.
Computationally Unconstrained Blind \ac{fdia} Realization, which requires the measurement and topology information.

\section{Computationally Constrained Blind \ac{fdia} Realization}
In this section, the given information is measurement-only, which is common in classic blind \ac{fdia}.

\subsection{DC realization}

\label{sec:measurement_identification}
Let
\begin{equation}
    \Z=[\z_1,\ldots,\z_\time]=\H\X+\noiseE
    \in\R^{\dimz\times\time}
    \label{eq:unconstrained_measurements}
\end{equation}
contain all branch-flow measurement channels.
We assume $\time>\dimz$. 
Let $r=\dimx-1$.  
The output is a full-column-rank weighted cycle matrix
\begin{equation}
    \label{eq:realization_objective}
    \widehat{\mathbf N}_{\C}
    =[\widehat{\n}_1,\ldots,\widehat{\n}_q]
    \in\R^{\dimz\times q},
    \qquad
    \widehat{\mathbf N}_{\C}^{T}\Z\approx\mathbf0 .
\end{equation}
Its column space estimates $\nul(\H^T)$, while
$\nul(\widehat{\mathbf N}_{\C}^{T})$ estimates the physical measurement
and attack space $\row(\H^T)$.  The implementation uses \ac{rms}
normalization, a rank-$r$ \ac{svd}, pivoted QR, sparse support
screening, and \ac{tls} fitting.

\begin{figure*}[t]
    \centering
    \begin{tikzpicture}[
        font=\footnotesize,
        stage/.style={
            draw,
            rounded corners=2pt,
            align=center,
            minimum height=10mm,
            text width=0.185\textwidth,
            inner sep=3pt,
            fill=blue!5
        },
        output/.style={stage,fill=green!8},
        auxiliary/.style={stage,dashed,fill=orange!8},
        flow/.style={-{Latex[length=2.2mm]},thick},
        auxflow/.style={flow,dashed}
    ]
        \node[stage] (data)
        {Branch measurements\\$\Z\in\R^{\dimz\times\time}$};
        \node[stage,right=5mm of data] (rms)
        {\ac{rms} normalization\\$\widetilde{\Z}=\mathbf D_s^{-1}\Z$};
        \node[stage,right=5mm of rms] (svd)
        {Rank-$r$ \ac{svd}\\signal basis $\mathbf U_r$};
        \node[stage,right=5mm of svd] (tree)
        {Pivoted QR\\algebraic tree $\widehat{\S}_{\G}$};

        \node[stage,below=7mm of tree] (screen)
        {Each chord $\e$\\Lasso--\ac{bic} screen $K_{\e}$};
        \node[stage,left=5mm of screen] (bic)
        {\ac{tls} coefficient ordering\\and \ac{bic} support $S_{\e}^{(0)}$};
        \node[stage,left=5mm of bic] (refine)
        {$10^{-2}$ support refinement\\and \ac{tls} refit};
        \node[output,left=5mm of refine] (cycles)
        {Weighted cycle matrix\\$\widehat{\mathbf N}_{\C}$, rank $q$};

        \node[output,below=7mm of cycles] (attack)
        {Recovered attack space\\
        $\nul(\widehat{\mathbf N}_{\C}^{T})$};
        \node[auxiliary,below=7mm of bic] (parameters)
        {Auxiliary overlap alignment\\relative parameters $\widehat{\p}$};

        \draw[flow] (data) -- (rms);
        \draw[flow] (rms) -- (svd);
        \draw[flow] (svd) -- (tree);
        \draw[flow] (tree) -- (screen);
        \draw[flow] (screen) -- (bic);
        \draw[flow] (bic) -- (refine);
        \draw[flow] (refine) -- (cycles);
        \draw[flow] (cycles) -- (attack);
        \draw[auxflow] (refine.south) |- (parameters.west);
        \node[font=\scriptsize,anchor=south west]
        at ([xshift=1mm]parameters.north west) {auxiliary output};
    \end{tikzpicture}
    \caption{Measurement-only cycle-space recovery pipeline.}
    \label{fig:pipeline}
\end{figure*}

\subsubsection{\ac{rms} normalization and algebraic-tree recovery}

For each branch $\e$, we define its uncentered \ac{rms} scale by
\begin{align}
    \label{eq:rms_normalization}
    s_{\e}=\left(\frac{1}{\time}\|\Z_{\e,:}\|_2^2\right)^{1/2},
    \\ \notag
    \mathbf D_s=\operatorname{diag}(s),
    \qquad
    \widetilde{\Z}=\mathbf D_s^{-1}\Z .
\end{align}
where all $s_{\e}$ must be nonzero.
\Cref{eq:rms_normalization} is a branchwise scale correction.

We decompose
$\widetilde{\Z}=\mathbf U\mathbf\Sigma\mathbf V^T$ such that
$\mathbf U_r\in\R^{\dimz\times r}$ contain the first $r$ left singular
vectors.
Apply column-pivoted QR to the branch coordinates:
\begin{equation}
    \label{eq:QR_pivot}
    \mathbf U_r^T\mathbf P_{\pi}=\mathbf Q\mathbf R,
    \qquad
    \widehat{\S}_{\G}
    =\{\e_{\pi(1)},\ldots,\e_{\pi(r)}\}.
\end{equation}
The remaining set
$\widehat{Q}=\E\setminus\widehat{\S}_{\G}$ contains $q$ candidate chords.
The implementation processes these chords in increasing branch-index order.
In noiseless data, the selected $r$ independent branches form a spanning tree.  
In noisy data, Column-pivoted QR selects the most independent branches.

\subsubsection{Cycle basis recovery}
For each $\e\in\widehat Q$, the normalized chord measurement is
regressed on all normalized tree measurements:
\begin{equation}
    \label{eq:lasso_screening}
    \widetilde{\Z}_{\e,:}^{T}
    \approx
    \widetilde{\Z}_{\widehat{\S}_{\G},:}^{T}\boldsymbol\beta_{\e}.
\end{equation}
We use a no-intercept Lasso path and select its point by the \ac{bic}.
The nonzero entries of $\boldsymbol\beta_{\e}$ define the screened tree
set $K_{\e}\subseteq\widehat{\S}_{\G}$.  
If fewer than two branches are selected, ordinary least squares is used, and the two largest parameters are retained.  
This guarantees a chord candidate with at least two tree branches.

On $S_{\e}^{\mathrm{scr}}=\{\e\}\cup K_{\e}$, we compute the unit \ac{tls} vector
\begin{equation}
    \label{eq:screened_tls}
    \mathbf w_{\e}^{\mathrm{scr}}
    \in\underset{\|\mathbf w\|_2=1}{\arg\min}\;
    \|\mathbf w^T\widetilde{\Z}_{S_{\e}^{\mathrm{scr}},:}\|_2^2 .
\end{equation}
The branches in $K_{\e}$ are ordered by decreasing corresponding magnitude
in $\mathbf w_{\e}^{\mathrm{scr}}$.
Equal magnitudes are ordered by branch index.
Let this order be $k_{\e,1},\ldots,k_{\e,|K_{\e}|}$ and define
\begin{equation}
    \label{eq:nested_cycle_candidates}
    C_{\e,j}=\{\e,k_{\e,1},\ldots,k_{\e,j}\},
    \qquad j=2,\ldots,|K_{\e}|.
\end{equation}
For each support, the \ac{tls} residual and explicit \ac{bic} score are
\begin{align}
    \label{eq:tls_bic}
    \rho_{\e,j}
    &=\sigma_{\min}^2(\widetilde{\Z}_{C_{\e,j},:}),\notag\\
    \operatorname{BIC}_{\e,j}
    &=\time\log\!\left[
      \max\!\left(\frac{\rho_{\e,j}}{\time},
      \varepsilon_{\mathrm{tiny}}\right)\right]
      +|C_{\e,j}|\log\time,
\end{align}
where $\varepsilon_{\mathrm{tiny}}$ is the machine positive-normal threshold.

The preliminary cycle $S_{\e}^{(0)}$ is the candidate with the smallest score. 
Thus, the \ac{bic} is used twice: first to screen the Lasso path and then to
choose the \ac{tls} prefix length.

\subsubsection{Weighted-constraint fitting and support refinement}
For any selected cycle $S$, let
\begin{equation}
    \label{eq:practical_parameter_optimization}
    \mathbf w_{\e}(S)
    \in\underset{\|\mathbf w\|_2=1}{\arg\min}\;
    \|\mathbf w^T\widetilde{\Z}_{S,:}\|_2^2 .
\end{equation}
This is the smallest left singular vector of
$\widetilde{\Z}_{S,:}$.  To return to the original branch-flow coordinates,
embed it in $\R^{\dimz}$ as
\begin{align}
    \label{eq:raw_cycle_constraint}
    \overline{\n}_{\e}(S)[\ell]
    =
    \begin{cases}
      \mathbf w_{\e}(S)[\ell]/s_{\ell},&\ell\in S,\\
      0,&\ell\notin S,
    \end{cases}
    \\ \notag
    \n_{\e}(S)=
    \frac{\overline{\n}_{\e}(S)}
         {\|\overline{\n}_{\e}(S)\|_2}.
\end{align}
The sign is fixed by making the parameter on the smallest-index cycle branch non-negative.
This convention does not change the cycle equation.
Stacking $\n_{\e}(S_{\e}^{(0)})$ gives the preliminary matrix
$\widehat{\mathbf N}_{\C}^{(0)}$.

We then apply the fixed raw-coordinate threshold
$\tau=10^{-2}$:
\begin{equation}
    \label{eq:support_refinement}
    \widehat{\c}_{\e}
    =\{\ell\in\E:
      |\widehat{\n}_{\e}^{(0)}(\ell)|>\tau\}.
\end{equation}
Equation~\eqref{eq:practical_parameter_optimization} is solved again on each
$\widehat{\c}_{\e}$, and \cref{eq:raw_cycle_constraint} gives the final
$\widehat{\n}_{\e}$.  We define
$\widehat F_{\C}=\{\widehat{\c}_{\e}:\e\in\widehat Q\}$ such that the final matrix
is accepted only if
\begin{equation}
    \label{eq:constraint_rank_check}
    \rank(\widehat{\mathbf N}_{\C})=q.
\end{equation}
The threshold is applied after the raw-coordinate conversion and unit normalization.
The chord is not forced to remain, and a refined cycle can contain only two branches.
Hence \cref{eq:constraint_rank_check} guarantees independent constraints. 
Also,
\begin{equation}
    \label{eq:estimated_bridge_set}
    \widehat{\E}_{\mathrm{no\mbox{-}cycle}}
    =\E\setminus\bigcup_{\e\in\widehat Q}\widehat{\c}_{\e}
\end{equation}
is the set of branches absent from every recovered cycle.
It equals the graph-bridge set when the recovered cycle basis link set is correct.

\subsubsection{Parameter estimation}
If $\widehat{\c}$ is a true cycle, then its exact weighted constraint obeys
\begin{equation}
    \label{eq:cycle_parameter_relation}
    \n_{\widehat{\c}}
    =\beta_{\widehat{\c}}
      \mathbf D_{\p_{\widehat{\c}}}^{-1}
      \chi_{\widehat{\c}},
    \qquad \beta_{\widehat{\c}}\neq0.
\end{equation}
Thus, the cycle gives the branches, the signs give an orientation up to one global sign, and the magnitudes give relative line parameters:
\begin{equation}
    \label{eq:param_est}
    \frac{\p(\ell_1)}{\p(\ell_2)}
    =\frac{|\n_{\widehat{\c}}(\ell_2)|}
           {|\n_{\widehat{\c}}(\ell_1)|},
    \qquad \ell_1,\ell_2\in\widehat{\c}.
\end{equation}

We also return a separate parameter estimate.  It groups
recovered cycles by shared branches and estimates a new smallest-eigenvector
constraint $\check{\n}_{\c}$ from
$\Z_{\c,:}\Z_{\c,:}^{T}$ for each cycle.  If cycles $\c_i$ and $\c_j$ share
$J_{ij}\neq\emptyset$, their coefficient scales are aligned by
\begin{equation}
    \label{eq:cycle_alignment}
    \gamma_{ij}
    =\frac{1}{|J_{ij}|}
      \sum_{\ell\in J_{ij}}
      \frac{|\check{\n}_{\c_i}(\ell)|}
           {|\check{\n}_{\c_j}(\ell)|},
    \qquad
    \check{\n}_{\c_j}\leftarrow
    \gamma_{ij}\check{\n}_{\c_j}.
\end{equation}
After propagating these ratios through each bi-connected component, the inverse coefficient magnitudes yield $\widehat{\p}$ up to a scale factor.  

\begin{algorithm}[!t]
\SetAlgoLined
\KwIn{$\Z$, branch set $\E$, number of buses $\dimx$}
\KwOut{$\widehat{\S}_{\G}$, $\widehat F_{\C}$,
$\widehat{\mathbf N}_{\C}$, auxiliary $\widehat{\p}$}
Compute $\widetilde{\Z}$ using \cref{eq:rms_normalization}\;
Compute $\widehat{\S}_{\G}$ by rank-$r$ \ac{svd} and pivoted QR
using \cref{eq:QR_pivot}\;
\For{$\e\in\widehat Q$}
{
    Screen $K_{\e}$ by no-intercept Lasso--\ac{bic} using
    \cref{eq:lasso_screening}\;
    If $|K_{\e}|<2$, retain the two largest ordinary least-squares parameters\;
    Order $K_{\e}$ by the \ac{tls} vector in \cref{eq:screened_tls}\;
    Select $S_{\e}^{(0)}$ by \cref{eq:nested_cycle_candidates,eq:tls_bic}\;
    Fit the preliminary raw-coordinate constraint by
    \cref{eq:practical_parameter_optimization,eq:raw_cycle_constraint}\;
}
Refine every support by \cref{eq:support_refinement} and refit its
constraint\;
Verify \cref{eq:constraint_rank_check}\;
Estimate auxiliary relative parameters using \cref{eq:cycle_alignment}\;
\KwRet{$\widehat{\S}_{\G},\widehat F_{\C},
\widehat{\mathbf N}_{\C},\widehat{\p}$}\;
\caption{\label{alg:tree_bi_detection}
Measurement-only cycle-space realization}
\end{algorithm}

\subsubsection{Recovered attack space and residual}
We define:
\begin{equation}
    \label{eq:practical_cycle_projector}
    \begin{aligned}
    \mathcal W_{\mathrm{pr}}
    &=\operatorname{col}(\widehat{\mathbf N}_{\C}),
    &\P_{\mathrm{pr}}
    &=\widehat{\mathbf N}_{\C}
      \widehat{\mathbf N}_{\C}^{+},\\
    \widehat{\mathcal A}
    &=\nul(\widehat{\mathbf N}_{\C}^{T})
    =\mathcal W_{\mathrm{pr}}^{\perp}.&&
    \end{aligned}
\end{equation}
For any $\mathbf u\notin\mathcal W_{\mathrm{pr}}$ and $\alpha>0$, a recovered
attack direction is
\begin{equation}
    \label{eq:practical_attack}
    \z_a=\alpha
    \frac{(\I-\P_{\mathrm{pr}})\mathbf u}
         {\|(\I-\P_{\mathrm{pr}})\mathbf u\|_2},
\end{equation}
such that $\widehat{\mathbf N}_{\C}^{T}\z_a=\mathbf0$.  

For test data $\Z_{\mathrm{test}}$, we report the raw cycle
residual
\begin{equation}
    \label{eq:raw_cycle_residual}
    \mathbf E_{\C}
    =\widehat{\mathbf N}_{\C}^{T}\Z_{\mathrm{test}}.
\end{equation}
Since the unit-norm columns are not generally orthogonal, the norm of \cref{eq:raw_cycle_residual} also depends on the basis of conditioning.

The realization uses no grid topology, incidence matrix, Jacobian, line parameters, or noise covariance. 
Its main operations are one rank-$r$ \ac{svd}, one \ac{rrqr}, $q$ Lasso paths, and at most $q(r-1)$ small \ac{tls} evaluations. 
The main failure comprises three-coordinate errors, omitted Lasso branches, the fixed $10^{-2}$ threshold, removal of the chord during refinement, and independent final-cycle fits that need not share consistent line parameters.

\subsection{AC limitation}

Because the AC \ac{cm} is nonlinear, we can not find a shortcut to extract the cycle basis under AC. 

\section{Computationally Unconstrained Blind \ac{fdia} Realization}
In this section, the given information is measurement plus topology.

\subsection{DC Realization}
\label{sec:unconstrained_ceiling}

There are $q$ fundamental cycles for a full rank null space estimation.
The target to optimize is:
\begin{align}
    \label{eq:opt_dc_un}
    \min_{\hat{\mathbf{N}}_\C} \|\hat{\mathbf{N}}_\C^T\Z\|^2 \\ s.t. \|\hat{\mathbf{N}}_\C\|^2= 1, \C\in \cyc.
\end{align}
This problem was solved in \cite{li2026cycle}, where proves the minimum cycle basis achieve the optimal generalization error.
The minimum cycle basis realization of DC unconstrained is the upper bound of DC blind \ac{fdia}.

\subsection{AC realization}

\subsubsection{Gauged Tree–Chord Cycle Manifold Fitting}
The AC realization assumes that the grid topology, the direction of each branch measurement, and the reference bus are known, while branch admittances and voltage states are not supplied.
A spanning tree $S$ is chosen to fix the fundamental cycles.
Let $l$ be the number of bridge branches in the graph.
According to \cref{eq:ac_branch_ratio}, directly optimizing the AC manifold requires to estimate $A_e,B_e$ and the voltage magnitude $u_i$.
Directly estimating the voltage magnitude is unacceptable because it contributes $(n-1)T$ parameters.
We proposed Gauged Tree–Chord Cycle Manifold Fitting (GTCM) method, which  only estimate a variant of $B_e$.
$A_e$ is eliminated using gauge method.
An equivalent variant of voltage magnitude is computed during the optimization, instead of working as optimization variables.

We introduce a gauged voltage at every bus:
\begin{equation}
\label{eq:gauged_voltage}
    V_i'=h_iV_i.
\end{equation}
The branch power equation becomes:
\begin{equation}
    \frac{V_j'}{V_i'}=\frac{h_j'}{h_i'}A_e+h_j'h_i'B_e\frac{\overline{s_e}}{|V'_i|^2}.
\end{equation}
We define the variants of $A_e,B_e$ as $A_e',B_e'$:
\begin{equation}
    A_e'= \frac{h_j}{h_i}A_e, B_e'=h_jh_iB_e.
\end{equation}
By imposing the constraint $\frac{h_j}{h_i}A_e=1$, $A_e'=1$ holds everywhere for branches on a spanning tree $S$ because there are $n-1$ free $A_e$ parameters on a tree and there is a unique path for each branch to the reference branch.

Then we can compute the gauged voltage magnitude from the reference branch to every tree branch.
For a tree branch $e=(i,j)$ traversed in the direction, $V_i'$ is already known, so:
\begin{equation}
    V_j'=1+B'_e\frac{\overline{s_e}}{V_i'}
\end{equation}
If a branch orientation is opposite to the tree traversal, the implementation solves a quadratic equation for the unknown sending-end $V_i$ and chooses the high-voltage root:
\begin{equation}
    V_j'=\frac{V_i'+\sqrt{V_i'-4B_e'\overline{s_e}}}{2}
\end{equation}
We can compute all  branch gauged voltage magnitude following the tree traversal.

Now take the chord.
The tree ratios in this cycle are already known.
According to the cycle manifold constraint in \cref{eq:ac_cycle_holonomy}, for a chord $c=(i,j)\in\mathcal{C}$ we define:
\begin{equation}
    S_c=\prod_{e\in S}\rho_e^{C_{ke}}.
\end{equation}
The chord ratio is therefore:
\begin{equation}
    \rho_c= S_c^{-1}
\end{equation}
The chord power flow is:
\begin{equation}
    \label{eq:chord_meas}
    \overline{s_c}=\frac{|V_i|^2}{K_c}(\rho_c-A_c).
\end{equation}
Define:
\begin{equation}
    d_c=\frac{|V_i|^2}{K_c}, \quad g_c=d_c\rho_c.
\end{equation}
Then:
\begin{equation}
    \overline{s_c}=g_c-d_cA_c.
\end{equation}
Suppose we have $\time$ training samples.
For every chord $c$, we can get the closed-form solution of $A_c$:
\begin{equation}
    \hat{A}_c=\frac{\sum_{r=1}^{\time}d_c^{(r)}(g_c^{(r)}-\overline{s_c}^{(r)})}{\sum_{r=1}^{\time}\|d_c^{(r)}\|^2}
\end{equation}
Then we project the power back:
\begin{equation}
    \hat{s}_c^{(r)}= g_c^{(r)}-d_c^{(r)}\hat{A}_c.
\end{equation}
The optimization is established between the predict and measurement chord power:
\begin{equation}
    \min_{B'}\frac{1}{2}\sum_{r=1}^\time\sum_{c\in \mathcal{C}}\|\frac{\hat{s}_c^{(r)}-\overline{s}_c^{(r)}}{\sigma_c}\|^2
\end{equation}

\subsubsection{Attack generation}
In this section, we use the learned parameters to construct the attack. 
We first generate zero-mean Gaussian noise for the tree measurement $s^a_{S}=s_{S}+r\frac{\delta_{S}}{\|\delta_{S}\|^2}$.
Where $\delta_{S}$ is generated under an independent normal distribution.
$r$ is a scaling factor controlling the attack magnitude.
By filling in the learned parameters, we compute the chord measurement $s^a_{\mathcal{C}}$ by following \cref{eq:gauged_voltage} to \cref{eq:gauged_voltage}.
The fit and batched generation use double-precision CUDA computation on a \ac{gpu}.
\section{Defender-Side Implications}
\label{sec:defender_implications}

\subsection{Grid planning}

\subsection{Design Principles for Defending Against Blind \ac{fdia}}
\label{sec:defense_principles}

The detection bound in the \ac{csd} framework \cite{li2026cycle} is optimal precisely because cycle-space recovery is the attacker's minimum requirement.
In this sense, \ac{csd} is dual to the necessity theorem: it measures how far the attacker is from the only structure that can support uniformly stealthy blind \ac{fdia}.

A moving-target defense defeats the attacker if and only if the perturbation changes the recovered 2-isomorphism class or its weighted cycle-space realization.
Perturbations that leave the 2-isomorphism class unchanged may change a nominal topology diagram, but they do not remove the attacker's sufficient information.

Measurement design should therefore restrict the attacker's ability to infer the weighted cycle space.
Meter placement, aggregation, protected channels, and deliberate withholding of selected branch-flow streams are useful to the extent that they deny enough independent cycle information to reconstruct $\nul(\H^T)$.

\section{Experiments}
\label{sec:experiments}

\subsection{Setup}
We evaluate the proposed method on the IEEE 14-, 30-, 57-, and 118-bus test systems. For each grid, the
first $\time_{\mathrm{train}}=\dimz$ operating points form the training set.
The clean branch \ac{rms} values $s_{\e}$ follow \cref{eq:rms_normalization}, and
the noisy training data are
\begin{equation}
    \label{eq:experiment_training_noise}
    \begin{aligned}
    \Z_{\mathrm{train}}^{\mathrm{noisy}}
    &=\Z_{\mathrm{train}}+\noiseE,\\
    \noiseE_{\e,t}&\sim\mathcal N(0,\sigma_{\e}^2),
    &\sigma_{\e}&=0.1\max(s_{\e},10^{-8}).
    \end{aligned}
\end{equation}
In the state-impact experiment, all methods are trained using a single noise
realization generated with random seed 7.  We compare seven DC attack-space
estimates.  \emph{H known} uses $\operatorname{col}(\H)$.
\emph{\ac{cs} unconstrained} uses the true minimum-cycle-basis supports but fits
their weighted parameters from the same noisy data.  \emph{\ac{cs} realization}
uses the measurement-only estimator in
\Cref{sec:measurement_identification}.  \emph{\ac{pca}} estimates the standardized
rank-$r$ measurement space.  \emph{\ac{rmt} perturbation} follows the random-matrix
attack construction of \cite{lakshminarayana2020data}, extended to the 
rank $r$ used in this comparison.  \emph{Matrix reconstruction} implements
the covariance-reconstruction attack of \cite{yang2022blind}.  The
\emph{linear autoencoder} uses a rank-$r$ linear encoder--decoder.

\subsection{State impact versus 95\%-calibrated pass rate}
\label{sec:dc_state_impact}
Let $\mathbf D_{\sigma}\equiv\operatorname{diag}(\sigma)$,
$\H_w=\mathbf D_{\sigma}^{-1}\H$, and let $\mathbf Q_{\perp}$ be an
orthonormal basis of $\nul(\H_w^T)$.  From unattacked noisy trials, we set
\begin{equation}
    \label{eq:dc_bdd_threshold95}
    \tau_{0.95}=Q_{0.95}\{J_i^{(0)}\},\qquad
    J_i=\|\mathbf D_{\sigma}\mathbf Q_{\perp}
    \mathbf Q_{\perp}^T\widetilde{\z}_i\|_2^2,
\end{equation}
where $\widetilde{\z}_i$ is the whitened measurement.  Thus 95\% of nominal
trials pass this fixed threshold.

For a method with attack basis $\mathbf B\in\R^{\dimz\times r}$, reduced QR
gives an orthonormal basis $\mathbf Q_B$ for
$\operatorname{col}(\mathbf D_{\sigma}^{-1}\mathbf B)$.  At strength
$\gamma$, the paired trial direction is
\begin{equation}
    \label{eq:dc_state_impact_attack}
    \widetilde{\z}_{a,i}(\gamma)
    =\gamma\mathbf Q_B
      \frac{\mathbf Q_B^T\mathbf g_i}
           {\|\mathbf Q_B^T\mathbf g_i\|_2},
    \qquad \mathbf g_i\sim\mathcal N(\mathbf0,\I).
\end{equation}
The mean state impact and pass rate are
\begin{equation}
    \label{eq:dc_state_impact_metric}
    \begin{aligned}
    \overline d_{\mathrm{DC}}(\gamma)
    &=\frac{1}{N\sqrt r}\sum_{i=1}^{N}
      \|\H_w^{\dagger}\widetilde{\z}_{a,i}(\gamma)\|_2,\\
    \pi(\gamma)&=\frac{1}{N}\sum_{i=1}^{N}
      \mathbf{1}\{J_i(\gamma)\leq\tau_{0.95}\}.
    \end{aligned}
\end{equation}
Unlike a pass-rate-versus-threshold plot, this metric shows how much the state
changes at the fixed detector operating point and avoids treating a
negligible-impact attack as equally effective.

\begin{figure*}[!t]
    \centering
    \begin{minipage}[t]{0.49\textwidth}
        \centering
        \includegraphics[width=\linewidth]{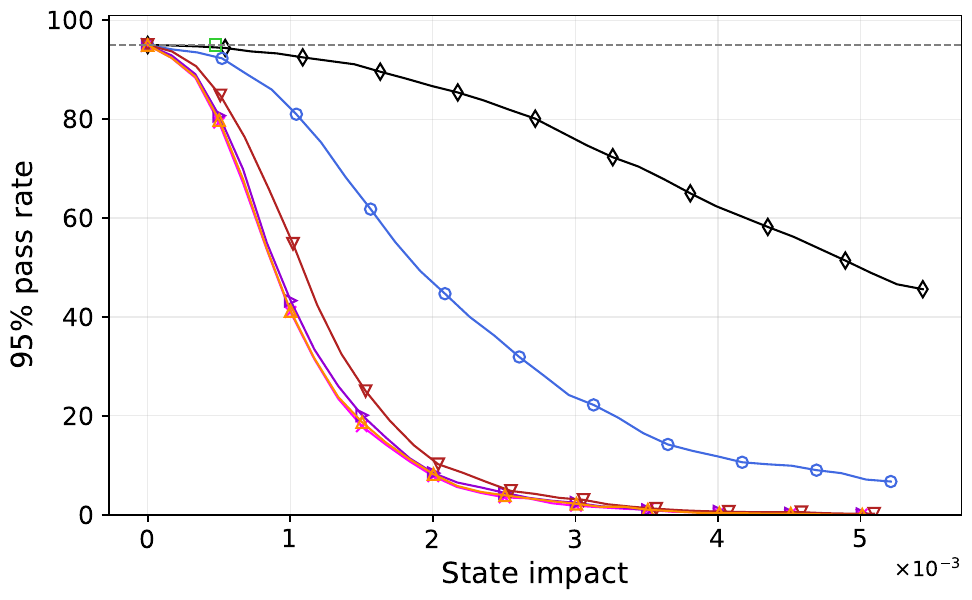}
        \vspace{-0.4em}
        {\footnotesize (a) IEEE 14-bus}
    \end{minipage}\hfill
    \begin{minipage}[t]{0.49\textwidth}
        \centering
        \includegraphics[width=\linewidth]{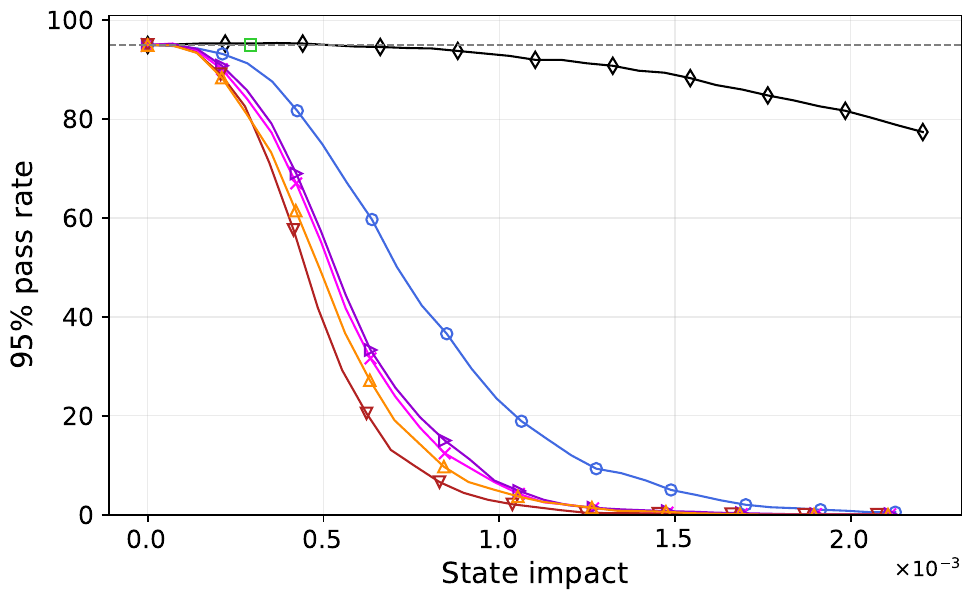}
        \vspace{-0.4em}
        {\footnotesize (b) IEEE 30-bus}
    \end{minipage}

    \vspace{0.3em}
    \begin{minipage}[t]{0.49\textwidth}
        \centering
        \includegraphics[width=\linewidth]{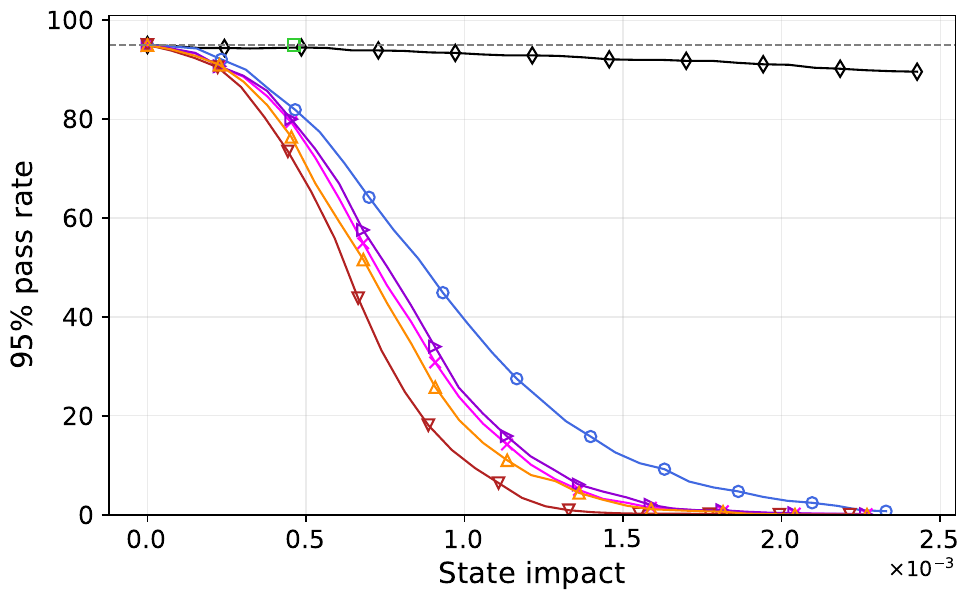}
        \vspace{-0.4em}
        {\footnotesize (c) IEEE 57-bus}
    \end{minipage}\hfill
    \begin{minipage}[t]{0.49\textwidth}
        \centering
        \includegraphics[width=\linewidth]{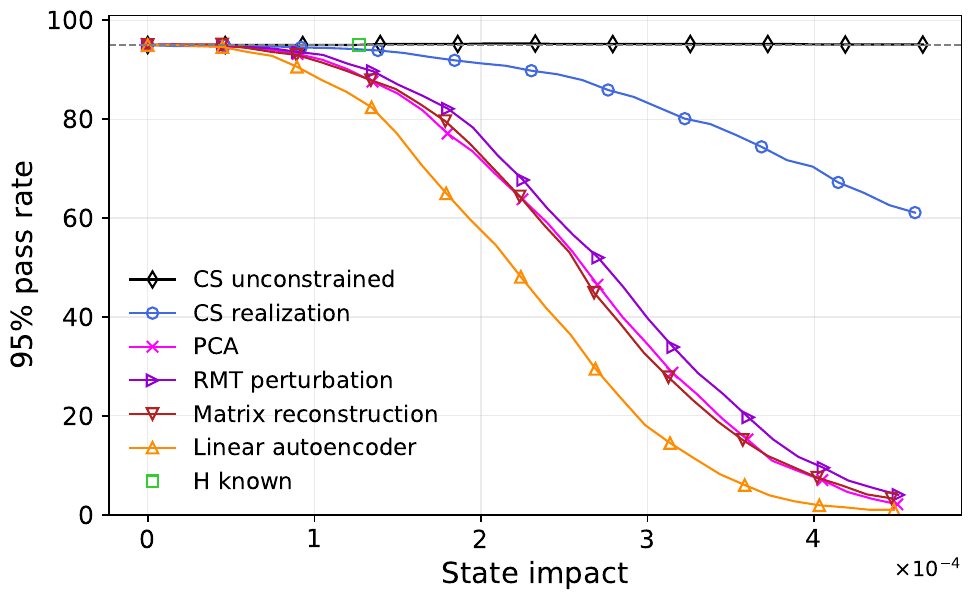}
        \vspace{-0.4em}
        {\footnotesize (d) IEEE 118-bus}
    \end{minipage}
    \caption{DC pass rate at a \ac{bdd} threshold calibrated for 95\% nominal
    acceptance versus mean state impact.  Each curve uses 1,000 paired trials
    and one fixed model learned at 10\% branch-relative training noise.}
    \label{fig:dc_state_impact}
\end{figure*}

In \Cref{fig:dc_state_impact}, \ac{cs} unconstrained remains closest to the nominal
95\% pass rate as state impact grows, showing the value of the correct cycle
supports.  \ac{cs} realization is consistently the strongest measurement-only
realization among the tested methods.  The remaining subspace methods lose
pass rate at smaller state impacts.  The H-known marker is the nominal
noise-only reference, rather than an attack-strength curve.

\subsection{Null-space-overlap noise sensitivity}
This experiment evaluates geometric recovery.
We use IEEE-118, where
\begin{equation}
    \label{eq:ieee118_dimensions}
    \begin{aligned}
    \dimz&=186, &\dimx&=118, &r&=117, &q&=69,\\
    \time_{\mathrm{train}}&=186.&&&&&
    \end{aligned}
\end{equation}
The remaining 1,254 operating points are reserved for the attack experiment.
For 10 logarithmically spaced levels
$\alpha\in[10^{-2},10^{-1}]$, the noisy training matrix in trial $j$ is
\begin{equation}
    \label{eq:noise_sensitivity_design}
    \Z_{\alpha,j}
    =\Z_{\mathrm{train}}+\alpha\mathbf D_s\mathbf\Xi_j,
    \qquad
    \mathbf\Xi_j[\e,t]\sim\mathcal N(0,1).
\end{equation}
There are 20 trials at every level.  Seed 17 is used, and the same
$\mathbf\Xi_j$ is scaled across all levels and passed to all methods.  Every
method is re-estimated for every trial.  A failed \ac{cs} realization trial is
assigned zero overlap, so its curve includes tree, support, coefficient, and
rank failures.

Let $\mathbf Q_0\in\R^{\dimz\times q}$ be an orthonormal basis of
$\nul(\H^T)$ and let $\widehat{\mathbf Q}$ span the estimated detector space.
For the \ac{cs} methods this is
$\operatorname{col}(\widehat{\mathbf N}_{\C})$; for the signal-space
methods it is the orthogonal complement of the estimated rank-$r$ basis.  We
report
\begin{equation}
    \label{eq:null_space_overlap}
    \eta
    =\frac{1}{q}\|\mathbf Q_0^T\widehat{\mathbf Q}\|_F^2
    =\frac{1}{q}\sum_{i=1}^{q}\cos^2\theta_i,
\end{equation}
where $\theta_i$ are the principal angles.  Thus $100\eta$ measures average
subspace alignment.

\begin{figure}[!t]
    \centering
    \includegraphics[width=\linewidth]{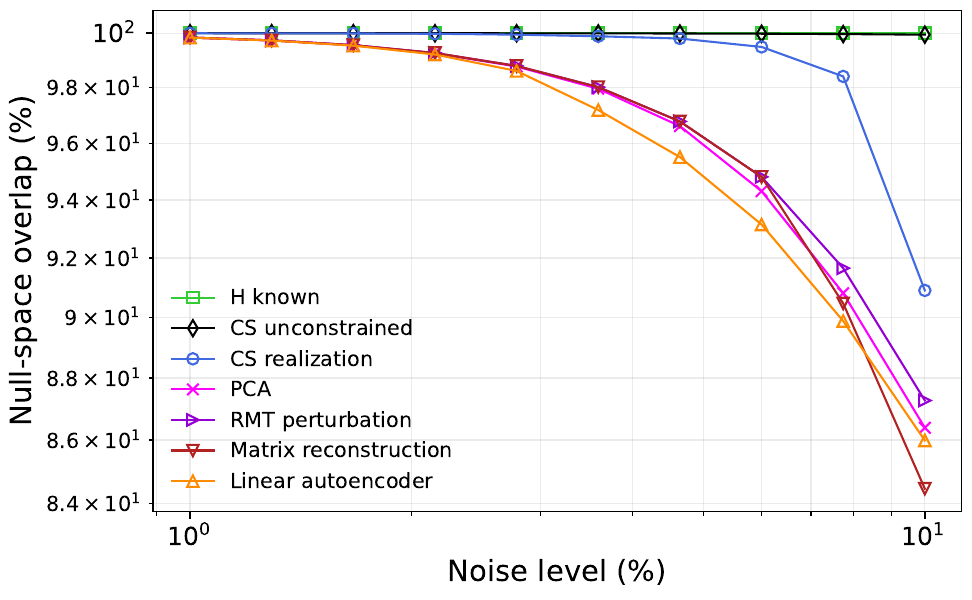}
    \caption{IEEE-118 null-space overlap versus branch-relative training
    noise.  Curves show the mean over 20 paired trials at each level.
    }
    \label{fig:noise_sensitivity_ieee118}
\end{figure}

In \Cref{fig:noise_sensitivity_ieee118}, \ac{cs} unconstrained remains essentially
at 100\%, since only its parameters are refitted.  \ac{cs} realization remains
above 99\% through approximately 6\% noise and falls to about 91\% at
10\% noise.  At 10\%, \ac{rmt} perturbation, \ac{pca}, the linear autoencoder, and
matrix reconstruction reach about 87.3\%, 86.4\%, 86.0\%, and 84.5\%,
respectively.  Hence the learned cycle structure achieves more robust
subspace recovery over the evaluated range, although the final drop reflects
discrete tree and support errors.

\Cref{fig:dc_state_impact,fig:noise_sensitivity_ieee118} measure different
properties.  \Cref{fig:dc_state_impact} fixes one model learned at 10\% noise
and measures \ac{bdd} acceptance against state impact in noise-normalized
coordinates.  \Cref{fig:noise_sensitivity_ieee118} re-estimates every method
over 20 training-noise realizations and measures unwhitened Euclidean
subspace geometry.  Their method ordering therefore need not coincide.

\begin{figure*}[!t]
    \centering
    \begin{minipage}[t]{0.49\textwidth}
        \centering
        \includegraphics[width=\linewidth]{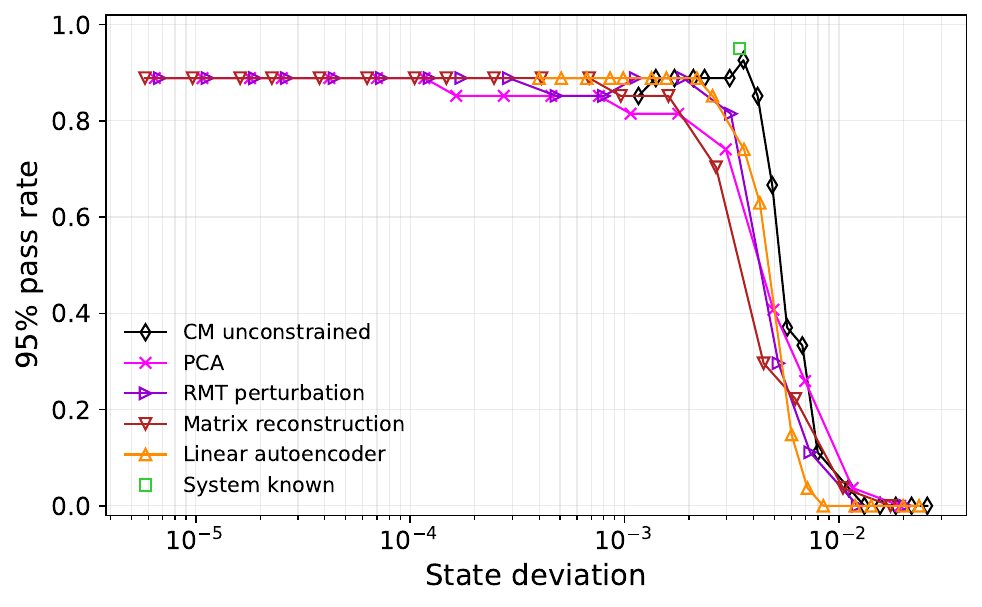}
        \vspace{-0.4em}
        {\footnotesize (a) IEEE 14-bus}
    \end{minipage}\hfill
    \begin{minipage}[t]{0.49\textwidth}
        \centering
        \includegraphics[width=\linewidth]{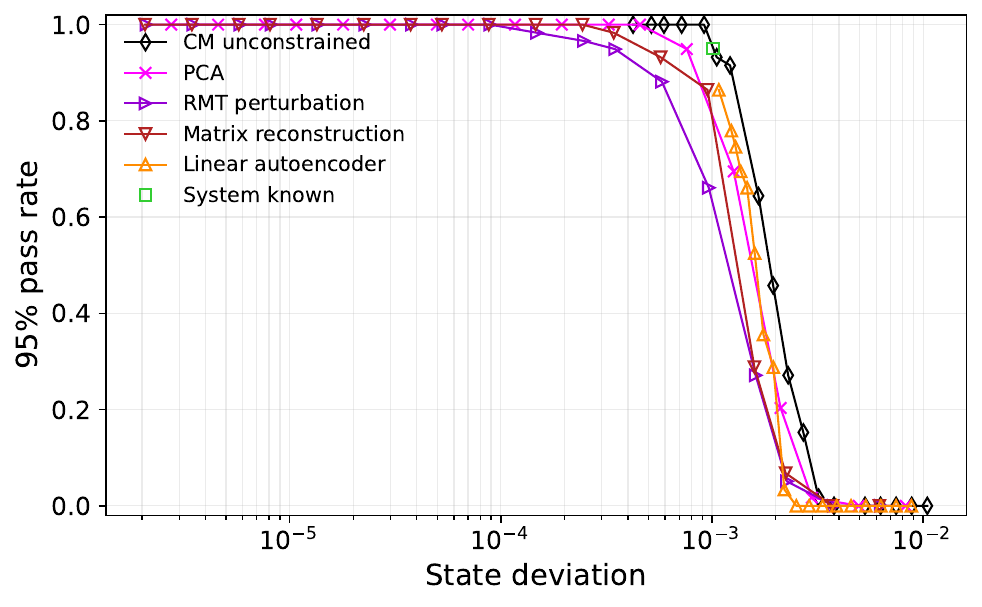}
        \vspace{-0.4em}
        {\footnotesize (b) IEEE 30-bus}
    \end{minipage}

    \vspace{0.3em}
    \begin{minipage}[t]{0.49\textwidth}
        \centering
        \includegraphics[width=\linewidth]{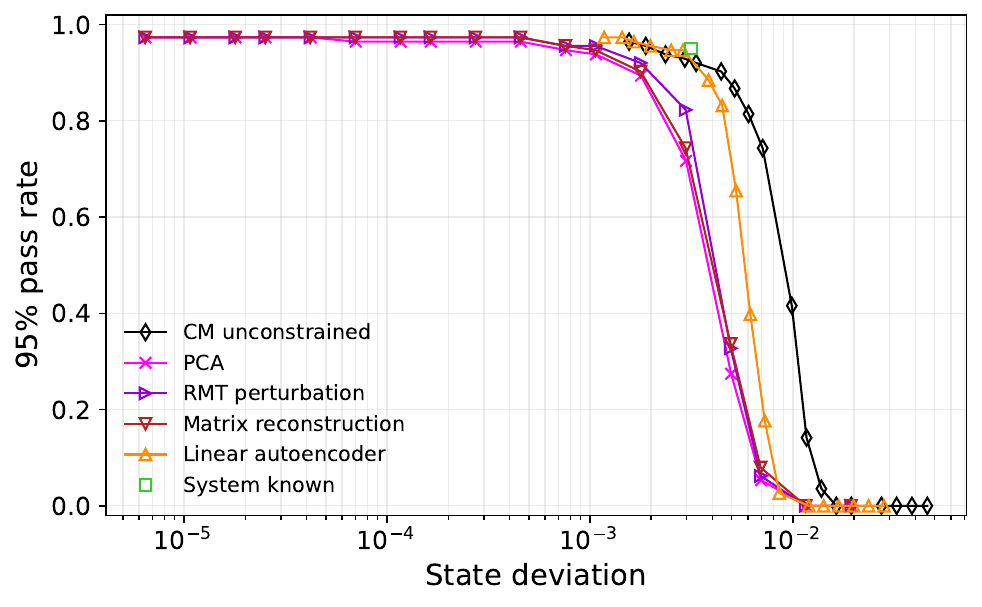}
        \vspace{-0.4em}
        {\footnotesize (c) IEEE 57-bus}
    \end{minipage}\hfill
    \begin{minipage}[t]{0.49\textwidth}
        \centering
        \includegraphics[width=\linewidth]{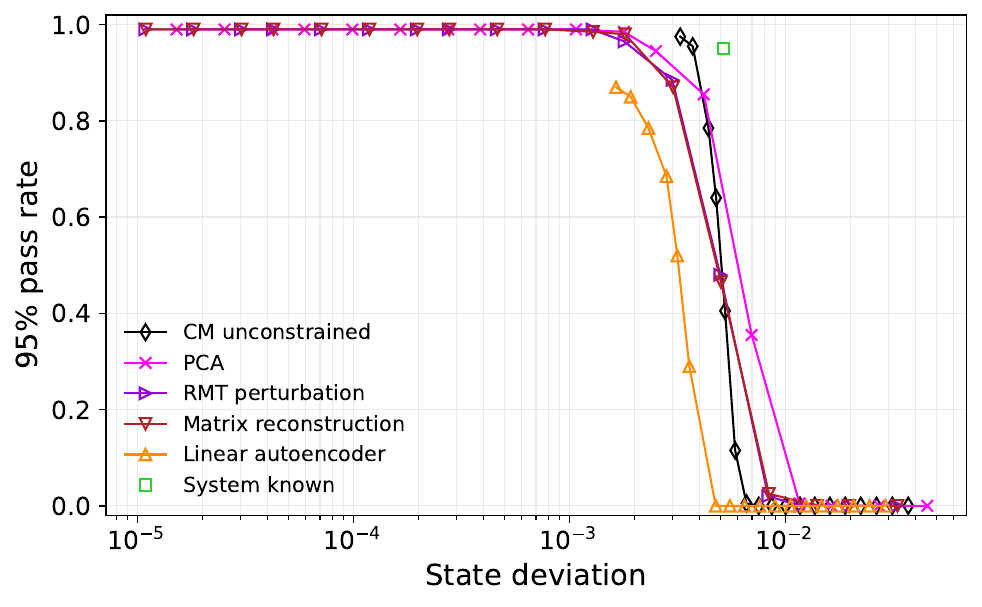}
        \vspace{-0.4em}
        {\footnotesize (d) IEEE 118-bus}
    \end{minipage}
    \caption{AC pass rate at a residual threshold calibrated for 95\%
    nominal acceptance versus mean state impact.  The system-known marker is a nominal noise-to-clean reference.}
    \label{fig:ac_state_impact}
\end{figure*}

\subsection{AC cycle-manifold results}

For AC, the threshold is the 95th percentile of the physical residual over
held-out noisy measurements.  If $\widehat V_i,\widehat\theta_i$ are the AC
state estimates for a generated measurement and
$\widehat V_i^{(0)},\widehat\theta_i^{(0)}$ are those for its noisy base, the
state impact is
\begin{equation}
\label{eq:ac_state_impact_metric}
\begin{aligned}
d_i^{\mathrm{AC}}
&=\frac{1}{\sqrt{2(\dimx-1)}}\left[
\sum_{k\ne\mathrm{ref}}(\Delta|V_{ik}|)^2\right.\\
&\left.\hspace{2.2em}+\sum_{k\ne\mathrm{ref}}
\operatorname{wrap}(\Delta\theta_{ik}-\Delta\theta_{i,\mathrm{ref}})^2
\right]^{1/2}.
\end{aligned}
\end{equation}
The angle difference removes the reference-angle gauge.  The plotted point
uses the mean of \cref{eq:ac_state_impact_metric} and the fraction whose
physical AC residual is below the fixed threshold, using the 
converged cohort across methods.

\Cref{fig:ac_state_impact} shows that the learned cycle manifold supports
nontrivial state changes while retaining a high pass rate and is competitive
with the tested subspace attacks on all four systems.  Its gain is clearest
on IEEE-30 and IEEE-57.  Because direct completion overwrites the noisy base
chords, \ac{cm} may have a small nonzero impact even as the tree perturbation tends
to zero.  These results validate the topology-assisted parameters construction;
they do not establish measurement-only AC manifold recovery.

To separate systematic completion error from trialwise variance, let
$e_{ic}=\widehat s_{ic}-s_{ic}^{\mathrm{clean}}$ on held-out chord $c$ and
define
\begin{equation}
    \label{eq:ac_chord_bias}
    b_{\mathrm{chord}}
    =\mathrm{baseMVA}\left(
      \frac{1}{q}\sum_{c=1}^{q}
      \left|\frac{1}{N}\sum_{i=1}^{N}e_{ic}\right|^2
      \right)^{1/2}.
\end{equation}

\begin{figure}[!t]
    \centering
    \includegraphics[width=\linewidth]{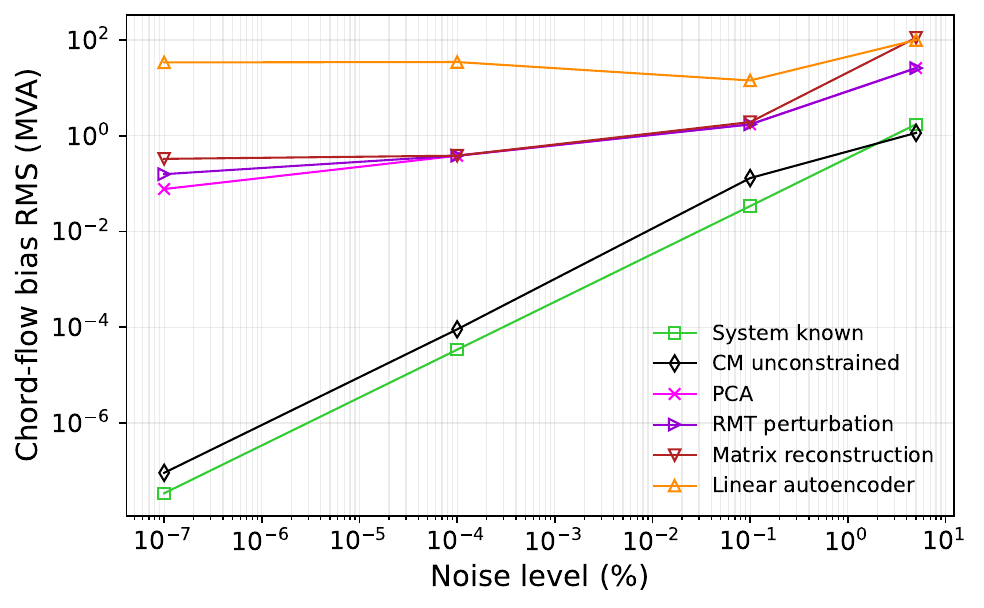}
    \caption{IEEE-118 chord-flow bias against measurement-noise level.  \ac{cm}
    unconstrained uses known topology but no physical admittances or voltage
    states.}
    \label{fig:ac_chord_bias}
\end{figure}

In \Cref{fig:ac_chord_bias}, \ac{cm} unconstrained follows the system-known
reference and its bias approaches numerical zero as measurement noise
vanishes, whereas the compared data-driven estimators retain nonzero bias.
Thus, under the correct topology, sufficient excitation, and a regular AC, \ac{cm} unconstrained has the zero-bias noiseless limit.  The nonlinear fit and the batch generation in these experiments are executed on a \ac{gpu}.

\section{Conclusion}
\label{sec:conclusion}
This paper proves that weighted cycle-space recovery is the necessary and sufficient information requirement for blind \ac{fdia} from branch-flow measurements, yielding $\H$ only up to 2-isomorphism and one  parameter scale per biconnected component.
The AC extension shows that  $P/Q$ measurements are parametersly described
by a necessary-and-sufficient cycle manifold and that this manifold can be
fitted from topology and measurements to generate parametersly consistent data.
Its lossless fixed-voltage small-angle linearization has the DC stealthy space
as its active-power tangent space and the weighted cycle space as its normal
space.
Learning the nonlinear fundamental-cycle structure without topology, and
therefore deriving a measurement-only AC blind \ac{fdia} upper bound, remains open.


\bibliographystyle{IEEEtran}
\bibliography{apssamp}

\end{document}